\newtheorem{theorem}{Theorem}[section]
\newtheorem{lemma}{Lemma}[section]
\newcommand{\be}{\begin{equation}}
\newcommand{\ee}{\end{equation}}
\newcommand{\bea}{\begin{eqnarray}}
\newcommand{\eea}{\end{eqnarray}}
\newcommand{\bra}[1]{\langle #1|}
\newcommand{\ket}[1]{|#1\rangle}
\newcommand{\one}{\mbox{$1 \hspace{-1.0mm}  {\bf l}$}}
\newcommand{\Nto}{\overset{N}{\rightarrow}}
\newcommand{\Nsim}{\overset{N}{\sim}}
\newcommand{\C}{\mathbb{C}}
\newcommand{\tr}{{\rm Tr}}
\newtheorem{observation}{Observation}[section]
\begin{document}

\title{Supplemental Material}


\maketitle

In Section \ref{sec:properties} we first review some properties of MPS that are useful for the remainder of this Supplemental Material (SM). In Section \ref{sec:proofs} we proof Theorem 1 and Theorem 2 of the main text. In Section \ref{sec:syms} we determine the local symmetry groups and in Section \ref{sec:trafos} the transformations of the example states mentioned in the main text. In Section \ref{sec:SLOCC} we derive the SLOCC classification of $d = D = 2$ MPS.
We use the same definitions and notations as in the main text. Moreover, we denote by $\sigma_1,\sigma_2,\sigma_3$ the Pauli matrices and use the notation $\C^\times = \C \setminus \{0\}$.

\section{Injective and normal Matrix Product States}\label{sec:properties}

MPS are defined in terms of rank three tensors. We use the following notation throughout the SM. Let us consider a rank-three tensor $A\in \mathbb{C}^d \otimes \mathbb{C}^D \otimes \mathbb{C}^D$ with
\begin{equation} \label{Eq:tensor}
  A = \sum_{i=0}^{d-1} \sum_{\alpha,\beta = 0}^{D-1} A_{\alpha\beta}^i \ket{i} \ket{\alpha}\bra{\beta}.
\end{equation}
Given the tensor $A$, we write:
\begin{align}
  A^i & = \sum_{\alpha\beta} A_{\alpha\beta}^i \ket{\alpha}\bra{\beta}, \\
  \ket{A} &= \sum_{i\alpha\beta} A_{\alpha\beta}^i \ket{i} \ket{\alpha}\ket{\beta}, \\
  \mathcal{A} &= \sum_{i\alpha\beta} A_{\alpha\beta}^i \ket{i} \bra{\alpha\beta}.
\end{align}
Clearly, the last two tensors and the set of matrices $\{A^i\}$ are equivalent representations of $A$. The state $\ket{A}$ is often referred to as the fiducial state of the tensor. It can also be expressed as
\begin{equation}
 \ket{A} = (\one \otimes \mathcal{A} \otimes \one)\left(\ket{\Phi^+} \otimes \ket{\Phi^+}\right) \equiv \mathcal{A}^{(23)} \left(\ket{\Phi^+} \otimes \ket{\Phi^+}\right) , \label{eq:dec1}
\end{equation}
where $\ket{\Phi^+} = \sum_{\alpha= 0}^{D-1} \ket{\alpha,\alpha}$ is the maximally entangled state.

In this SM, we consider non-translationally invariant (non-TI) MPS on $N$ subsystems that are defined with the help of  $N$ different tensors $A_1, \dots A_N \in \mathbb{C}^d \otimes \mathbb{C}^D \otimes \mathbb{C}^D$ as
\begin{equation}
\ket{\Psi} = \sum_{i_1,\ldots,i_N = 0}^{d-1} \tr\left(A_1^{i_1}\ldots A_N^{i_N}\right) \ket{i_1,\ldots,i_{N}}. \label{eq:MPS}
\end{equation}
An MPS that is generated by a single tensor $A = A_1 = \ldots = A_N$ is TI and denoted by $\Psi(A)$.

A particularly important class of MPS is the one which corresponds to \emph{normal} tensors. A set of tensors $A_1, \ldots, A_N$ as defined in Eq. (\ref{Eq:tensor}) is called  normal if there is an $L$ such that any $L$ consecutive tensors satisfy that the map
\begin{equation}
X \mapsto \sum_{i_1,\ldots,i_{L} = 0}^{d-1} \tr\left(A_k^{i_1}\ldots A_{k+L-1}^{i_{L}} \cdot X\right) \ket{i_1,\ldots ,i_{L}}
\end{equation}
is injective. Here and in the following, all indices are periodical, i.e., $i+N \equiv i$. $L$ is referred to as the injectivity length of the MPS. The normality of a tensor can equivalently be characterized as the property that any $L$ consecutive tensors satisfy
\begin{equation}
\underset{i_1,\dots,i_L}{\mathrm{span}} \left\{ A_k^{i_1}\ldots A_{k+L-1}^{i_{L}} \right\} = \mathbb{C}^{D \times D}.
\end{equation}
The set of normal MPS on $N$ subsystems with bond dimension $D$ is denoted by $\mathcal{N}_{N,D}$. A tensor is called \emph{injective} if it is normal with $L=1$. In a slight abuse of standard notation, we call an MPS normal (injective) if the corresponding tensor is normal (injective) respectively. An other equivalent condition for being injective is that the map $\mathcal{A}$ corresponding to the defining tensor $A$ has a left inverse $\mathcal{A}^{-1}$ such that
\begin{equation}
  \mathcal{A}^{-1} \mathcal{A} = \sum_{\alpha \beta} \ket{\alpha,\beta}\bra{\alpha,\beta} .
\end{equation}
Note that injectivity requires $d \geq D^2$. Since we are only interested in MPS whose single-subsystem reduced states have full rank the only injective MPS we consider satisfy $d = D^2$.

A fundamental property of MPS is that two different sets of tensors can generate the same state. For instance, if the tensors $B_1, \ldots, B_N$ are related to the tensors $A_1, \ldots, A_N$ as $A_k^j = x_k^{-1}B_k^jx_{k+1}$ for all $k,j$, with $x_{N+1} \equiv x_1$, then
\begin{equation}
\sum_{i_1,\ldots,i_N = 0}^{d-1} \tr\left(A_1^{i_1}\ldots A_N^{i_N}\right) \ket{i_1,\ldots,i_{N}} =
\sum_{i_1,\ldots,i_N = 0}^{d-1} \tr\left(B_1^{i_1}\ldots B_N^{i_N}\right) \ket{i_1,\ldots,i_{N}}.
\end{equation}
For normal tensors, in fact, this is the only way how two different sets of tensors can generate the same states as stated by the fundamental theorem which was proven in Ref. \cite{Molnar2018}:
\begin{theorem}[\cite{Molnar2018}]
\label{thm:fund}
 The tensors $A_1,\ldots,A_N$ and $B_1,\ldots,B_N$ generate the same normal MPS $\Psi$ iff there exist regular matrices $x_1,\ldots,x_N$ such that $A_k^j = x_k^{-1}B_k^jx_{k+1}$ for all $k$ and $j$, with $x_{N+1} \equiv x_1$; that is, iff
\begin{align}
\ket{A_k} = \one \otimes x_k^{-1} \otimes x_{k+1}^T\ket{B_k} \ \forall \ k. \label{eq:fundtheom}
\end{align}
The matrices $x_1,\ldots,x_N$ are unique up to a multiplicative constant.
\end{theorem}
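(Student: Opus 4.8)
The plan splits naturally into the elementary \emph{if} direction and the substantive \emph{only if} direction. For the \emph{if} direction I would simply substitute the gauge relation into the amplitudes: if $A_k^j = x_k^{-1}B_k^j x_{k+1}$ with $x_{N+1}\equiv x_1$, then every internal pair $x_{k+1}x_{k+1}^{-1}$ cancels, giving $A_1^{i_1}\cdots A_N^{i_N} = x_1^{-1}\left(B_1^{i_1}\cdots B_N^{i_N}\right)x_1$, and the residual conjugation by $x_1$ drops out of the trace by cyclicity. Hence $\tr\left(A_1^{i_1}\cdots A_N^{i_N}\right)=\tr\left(B_1^{i_1}\cdots B_N^{i_N}\right)$ for all $\vec{\imath}$, so both tensor sets define the same $\ket{\Psi}$. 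I would also record that the matrix identity is equivalent to the fiducial-state form $\ket{A_k}=\one\otimes x_k^{-1}\otimes x_{k+1}^T\ket{B_k}$, since left/right multiplication of the matrices $B_k^j$ translates directly into operators on the two virtual legs of $\ket{B_k}$ via the decomposition in Eq.~(\ref{eq:dec1}).

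The \emph{only if} direction is the heart of the theorem. First I would reduce to the injective case by blocking: since the tensors are normal with injectivity length $L$, grouping $L$ consecutive sites into one effective site produces injective tensors, and equality of the original states is equivalent to equality of the blocked states. It therefore suffices to prove the statement for injective tensors and then \emph{unblock} the resulting gauge back onto individual sites.

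For injective tensors I would exploit that each map $\mathcal{A}_k$ has a left inverse with $\mathcal{A}_k^{-1}\mathcal{A}_k=\one$ on the virtual space. The idea is to peel off one physical site at a time: writing $\ket{\Psi}$ with the physical index at site $k$ expressed through $\mathcal{A}_k$ (respectively $\mathcal{B}_k$) and applying $\mathcal{A}_k^{-1}$ to that index, equality of the two representations forces the remaining $A$- and $B$-environments to be related by the single operator $G_k:=\mathcal{A}_k^{-1}\mathcal{B}_k$ acting on the two adjacent virtual legs. The crux is to show that $G_k$ necessarily factorizes as a product gauge $x_k^{-1}\otimes x_{k+1}^T$ across the two virtual indices, with the very same $x_{k+1}$ reappearing at the neighboring bond; injectivity is exactly what excludes any genuinely entangling (non-product) relation between the representations, and iterating around the ring with the periodic identification $x_{N+1}\equiv x_1$ closes the chain of local gauges. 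In the normal case this argument is run on the blocked tensors, after which I would unblock by inserting gauges on the bonds internal to each block, again using injectivity of the constituent tensors to fix them consistently.

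Finally, for uniqueness I would take two gauges $\{x_k\}$ and $\{x_k'\}$ and set $z_k:=x_k'x_k^{-1}$; the two relations give $z_k B_k^j = B_k^j z_{k+1}$ for all $j$, so each $z_k$ intertwines all the $B_k^j$. Since products of $L$ consecutive $B$-matrices span the full algebra $\mathbb{C}^{D\times D}$, a Schur-type argument forces each $z_k$ to be scalar, and matching the scalars around the ring shows they coincide; hence the $x_k$ are fixed up to a single common constant. I expect the main obstacle to be the factorization step: proving that the operator relating two injective representations is a \emph{local} product gauge rather than an arbitrary invertible map on the $D^2$-dimensional virtual space. This is where injectivity must be used in an essential, non-formal way, and where the periodic boundary condition has to be handled carefully so that the local gauges glue into the global relation.
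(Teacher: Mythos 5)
A preliminary remark: the paper does not prove this statement at all --- Theorem~\ref{thm:fund} is quoted, with citation, from Ref.~\cite{Molnar2018}, and the SM uses it strictly as a black box (via Lemma~\ref{lem:stayinj}) to prove its own Theorems 1 and 2. So your proposal can only be measured against the proof in that reference, which is a long, multi-lemma argument. Within your sketch, two parts are sound and essentially complete: the ``if'' direction (telescoping of the gauges plus cyclicity of the trace, and the translation between the matrix form and the fiducial-state form via Eq.~(\ref{eq:dec1})), and the uniqueness part (the intertwining relation $z_k B_k^j = B_k^j z_{k+1}$ with $z_k = x_k' x_k^{-1}$, propagated through $L$-fold products whose span is $\C^{D\times D}$, forcing each $z_k$ to be a scalar and all scalars to agree around the ring).

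The genuine gap is the step you yourself call ``the main obstacle'': it is not a detail to be filled in later, it \emph{is} the theorem. After peeling with the left inverses, what you have is the single identity $\bigotimes_k G_k \ket{\omega} = \ket{\omega}$, where $G_k = \mathcal{A}_k^{-1}\mathcal{B}_k$ and $\ket{\omega}$ is the ring of maximally entangled bond states; you then assert that injectivity forces each $G_k$ to factorize as $x_k^{-1}\otimes x_{k+1}^T$ with consistently matched, invertible $x$'s. No argument is given, and no purely site-local argument can exist, because your reasoning nowhere uses the length $N$ of the ring while the claim is false for small $N$. Concretely, for $N=1$, $D=2$, $d=4$: take $A^{ij} = E_{ij}$ (matrix units) and $B^{00}=E_{00}$, $B^{01}=E_{01}$, $B^{10}=E_{10}$, $B^{11}=E_{11}+E_{01}$. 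Both tensors are injective, $\tr(A^{ij}) = \tr(B^{ij})$ for all $ij$, so they generate the same state, yet no invertible $x$ satisfies $A^{ij} = x^{-1}B^{ij}x$ (the first three relations force $x \propto \one$, contradicting the fourth). Equivalently, $\ket{\Phi^+}$ admits plenty of invertible non-product stabilizers, so the identity $\bigotimes_k G_k\ket{\omega}=\ket{\omega}$ alone never implies product structure of the individual $G_k$; extracting it for sufficiently large $N$, with periodic boundary conditions blocking any induction from open ends, is precisely where \cite{Molnar2018} spends its effort (and why the theorem carries a largeness condition on $N$ that the quoted statement suppresses). A secondary gap of the same kind: after blocking, the injective-case gauges live only on the bonds \emph{between} blocks, so recovering a gauge on every bond requires rerunning the argument at all offsets (and handling $N$ not divisible by $L$) and then proving the gauges obtained from different blockings are mutually consistent; ``using injectivity of the constituent tensors to fix them consistently'' names this step but does not perform it. As it stands, the proposal is a correct road map whose central leg is missing.
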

Theorem \ref{thm:fund} is the basis of the proofs of Theorem 1 and Theorem 2 of the main text.

\section{Proof of Theorem 1 and Theorem 2}
\label{sec:proofs}
In this section we provide the proof of Theorem 1 and Theorem 2 of the main text. To this end, the following Lemma will be important.

\begin{lemma}
\label{lem:stayinj}
	Suppose $\Psi\in \mathcal{N}_{N,D}$ defined by a tensor $A$ with injectivity length $L$ can also be written as a MPS with non-TI tensors $B_1,\dots B_N$, all with bond dimension $D$. Then this description is also normal with injectivity length $L$.
\end{lemma}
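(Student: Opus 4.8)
The plan is to convert the notion of injectivity length, which a priori refers to a specific tensor, into an intrinsic property of the state $\ket{\Psi}$ — namely the ranks of its reduced density matrices on contiguous regions — and then to exploit the hypothesis that the $B$-description has the \emph{same} bond dimension $D$ as the normal $A$-description. Throughout I view the chain as a ring, so that a contiguous region $R$ and its complement $R^c$ are both contiguous, each bordered by exactly two virtual bonds.

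First I would record a rank identity. Cutting the two bonds between $R$ and $R^c$ and using Eq.~\eqref{eq:dec1}, the state factorises as $\ket{\Psi} = (\Gamma_R \otimes \Gamma_{R^c})(\ket{\Phi^+}\otimes\ket{\Phi^+})$, where $\Gamma_R$ (resp. $\Gamma_{R^c}$) is the linear map sending the two boundary virtual indices of $R$ (resp. $R^c$) into the physical Hilbert space of that region, built from the blocked tensor there, and the two copies of $\ket{\Phi^+}$ live on the two cuts. By definition, injectivity of the blocked tensor on a window is precisely injectivity of the corresponding $\Gamma$. Since $\ket{\Phi^+}\otimes\ket{\Phi^+}$ has full Schmidt rank $D^2$ across the $R$-virtual versus $R^c$-virtual bipartition, the reduced state obeys $\mathrm{rank}\,\rho_R = \mathrm{rank}(\Gamma_R \Gamma_{R^c}^T) \le \min(\mathrm{rank}\,\Gamma_R,\mathrm{rank}\,\Gamma_{R^c}) \le D^2$, with equality $\mathrm{rank}\,\rho_R = D^2$ iff \emph{both} $\Gamma_R$ and $\Gamma_{R^c}$ are injective. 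Hence, whenever $|R|\ge L$ and $|R^c|\ge L$, the normal $A$-description makes both maps injective, so $\mathrm{rank}\,\rho_R = D^2$; and this value depends on $\ket{\Psi}$ alone.

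Now I would feed the same regions into the $B$-description. Because $B$ also has bond dimension $D$, its map $\Gamma_R^B$ has rank at most $D^2$, so $\mathrm{rank}\,\rho_R = D^2$ forces $\mathrm{rank}\,\Gamma_R^B = D^2$, i.e. $\Gamma_R^B$ is injective. Taking $R$ to be an arbitrary window of length exactly $L$ (so $|R^c| = N-L \ge L$) shows that every $L$ consecutive $B$-tensors are injective, i.e. $B$ is normal with injectivity length at most $L$. For the matching lower bound, minimality of $L$ provides an $(L-1)$-window $R_0$ on which the $A$-blocked tensor fails to be injective; with $|R_0^c| = N-L+1 \ge L$ the complement map $\Gamma_{R_0^c}^A$ is injective, so $\mathrm{rank}\,\rho_{R_0} = \mathrm{rank}\,\Gamma_{R_0}^A < D^2$. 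Since the previous step already makes $\Gamma_{R_0^c}^B$ injective, were $\Gamma_{R_0}^B$ also injective we would obtain $\mathrm{rank}\,\rho_{R_0}=D^2$, a contradiction; hence $B$ is non-injective on $R_0$ and its injectivity length is exactly $L$.

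The main obstacle is the clean bookkeeping behind the identity $\mathrm{rank}\,\rho_R = \mathrm{rank}(\Gamma_R \Gamma_{R^c}^T)$: verifying that the two boundary bonds genuinely assemble into a single full-rank $D^2$-dimensional maximally entangled state across the bipartition, and that rank $D^2$ of the product forces each factor to be injective. A secondary point is the size budget, since the argument as stated needs a window and its complement to both have length $\ge L$, i.e. $N \ge 2L$; the remaining small-$N$ cases I would dispatch separately, either by first blocking $L$ sites (which renders $A$ injective without changing the bond dimension $D$) or by direct inspection.
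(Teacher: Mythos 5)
Your proposal is correct and is essentially the paper's own argument: your intrinsic invariant $\mathrm{rank}\,\rho_R = \mathrm{rank}\bigl(\Gamma_R \Gamma_{R^c}^T\bigr) = D^2$ is exactly the paper's statement that the space $V$ of states on a window obtained by applying arbitrary linear functionals to its complement is $D^2$-dimensional, and both proofs then exploit that a bond-dimension-$D$ description can only achieve this rank if its window map is injective. The only differences are cosmetic packaging (Schmidt ranks versus the contraction space $V$), your explicit lower-bound step showing the injectivity length cannot drop below $L$ (which the paper's proof leaves implicit), and the shared, equally implicit, size requirement $N \geq 2L$.
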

\begin{proof}
  The two different ways to express the MPS are the following:
  \begin{equation}
    \ket{\Psi} = \sum_i \tr(A^{i_1} \dots A^{i_{N}} ) \ket{i_1 \dots i_N} = \sum_i \tr(B_1^{i_1} \dots B_N^{i_{N}} ) \ket{i_1 \dots i_N} .
  \end{equation}
  Let us apply any linear functional to the last $N-k$ subsystems, where $k$ satisfies $L\leq k \leq N-L$. That is, we consider the vector space
  \begin{equation}
    V = \left\{\sum_i f_{i_{k+1} \dots i_{N}} \cdot \tr(A^{i_1} \dots A^{i_{k}} A^{i_{k+1}} \dots A^{i_{N}}) \ket{i_1 \dots i_k} \ \middle | \  f\in \mathbb{C}^{d(N-k)} \right\} .
  \end{equation}
  Due to the normality of the $A$ tensor, the products of the last $N-k\geq L$ matrices describing the MPS span the whole space of $D\times D$ matrices, that is,
  \begin{equation}
    \left\{\sum_i f_{i_{k+1} \dots i_{N}} \cdot A^{i_{k+1}} \dots A^{i_{N}} \ \middle | \  f\in \mathbb{C}^{d(N-k)} \right\}  = \mathbb{C}^{D\times D}.
  \end{equation}
  Therefore, the vector space $V$ can also be written as
  \begin{equation}
    V = \left\{\sum_i  \tr(A^{i_1} \dots A^{i_{k}} \cdot X ) \ket{i_1 \dots i_N} \ \middle | \  X\in \mathbb{C}^{D \times D} \right\} .
  \end{equation}
  Due to the normality of the tensor $A$, the map
  \begin{equation}
    X \mapsto \sum_i  \tr(A^{i_1} \dots A^{i_{k}} \cdot X ) \ket{i_1 \dots i_N}
  \end{equation}
  is injective. As it is also linear, the vector space  $V$ is $D^{2}$ dimensional. $V$ can also be expressed with the help of the $B$ tensors. Similarly to the derivation above, we find
  \begin{equation}
    V = \left\{\sum_i  \tr(B_1^{i_1} \dots B_k^{i_{k}} \cdot X ) \ket{i_1 \dots i_N} \ \middle | \  X\in W \leq \mathbb{C}^{D \times D} \right\} ,
  \end{equation}
  where $W$ is a subspace of the space of all $D$-by-$D$ matrices that is spanned by the products of the last $N-k$ matrices describing the MPS. As $V$ is $D^2$ dimensional, it immediately follows that $W=\mathbb{C}^{D \times D}$ and that the map
  \begin{equation}
    X \mapsto \sum_i  \tr(B_1^{i_1} \dots B_k^{i_{k}} \cdot X ) \ket{i_1 \dots i_N}
  \end{equation}
  is injective. This argument can be repeated to any $L$ consecutive subsystems, thus the tensors
  $B_1, \ldots B_N$ form a normal description of the MPS $\Psi$.
\end{proof}

Using Lemma \ref{lem:stayinj} we can proof Theorem 1 of the main text, which provides a characterization of the local symmetries of a normal MPS $\Psi(A)$, i.e., of all $S = s_1 \otimes \ldots \otimes s_N$ such that
\begin{align}
 S\ket{\Psi(A)} = \ket{\Psi(A)}. \label{eq:sym2}
\end{align}
We restate the theorem here for the sake of readability.\\

\noindent {\bf Theorem 1.} \hspace{-0.35cm}
\emph{
The local (global) symmetries of $\Psi(A)\in {\cal N}_{N,D}$ are in one-to-one correspondence with the $N$-cycles (1-cyles) in $G_A$.
}
\proof{
We first show that the $S = s_1 \otimes \ldots \otimes s_N$ that solve Eq. (\ref{eq:sym2}) correspond to $N$-cycles in $G_A$. Note that the state $S\ket{\Psi(A)}$ is an MPS with bond dimension $D$, generated by the fiducial states $\ket{A_k} = s_k \otimes \one \otimes \one\ket{A}$ for $k = 1,\ldots,N$. Lemma \ref{lem:stayinj} implies that the representation $S\ket{\Psi(A)}$ of the normal MPS $\ket{\Psi(A)}$ is normal too and thus Theorem \ref{thm:fund} can be used to find all $S$ that satisfy Eq. (\ref{eq:sym2}). Because of Theorem \ref{thm:fund}, Eq. (\ref{eq:sym2}) is fulfilled iff there are (up to a multiplicative factor) unique regular matrices $x_1,\ldots,x_N$ such that
\begin{align}
 (s_k \otimes x_k^{-1} \otimes x_{k+1}^T)\ket{A} = \ket{A} \ \forall k, \label{eq:sym3}
\end{align}
where $x_{N+1} = x_1$. That is, $S = s_1 \otimes \ldots \otimes s_N$ is a symmetry of $\Psi(A)$ iff there are operators $h_1,\ldots,h_N \in G_A$, with $h_k = s_k \otimes x_k \otimes y_k^T$ \footnote{Note that the local operators defining $h_k$ are, of course, only unique up to a multiplicative factor, i.e., $h_k = s_k \otimes x_k \otimes y_k^T = (\frac{1}{\lambda_1 \cdot \lambda_2} s_k) \otimes (\lambda_1 \cdot x_k) \otimes (\lambda_2 \cdot y_k)^T$ for any $\lambda_1, \lambda_2 \neq 0$. However, in the following we always fix one particular choice of local operators.}, that can be connected to an $N$-cycle, i.e., for which $y_kx_{k+1} \propto \one$ holds. This shows that the local symmetry group of $\Psi(A)$ is in one-to-one correspondence with the $N$-cycles in $G_A$.

If $S = s^{\otimes N}$ is a global symmetry, $s_k \propto s$ holds and thus the uniqueness (up to a multiplicative factor) of the $x_k$ matrices in Eq. (\ref{eq:sym3}) implies that they are all proportional to each other. Hence, a symmetry is global iff it originates from a 1-cycle. \qed \\
}

Theorem 2 provides a criterion for when the transformation $A \Nto B$ is possible, i.e., when there is a $g = g_1 \otimes \ldots \otimes g_n$ such that
\begin{align}
 g\ket{\Psi(A)} = \ket{\Psi(B)}. \label{eq:trans1}
\end{align}
We again restate the theorem before we prove it.\\

\noindent {\bf Theorem 2.} \hspace{-0.35cm}
\emph{
$A \Nto B$ with local (global) transformations iff there exists an $N$-cycle (1-cycle) in $G_{A,B}$.
}
\proof{
The ``if''-part is trivial. To prove the ``only if''-part suppose that Eq. (\ref{eq:trans1}) holds. Then $g\ket{\Psi(A)}$ is an MPS representation of $\Psi(B)$ with the same bond dimensions. Lemma \ref{lem:stayinj} then implies that $g\ket{\Psi(A)}$ is normal too; even if $g$ is singular. Hence, $g\ket{\Psi(A)}$ and $\Psi(B)$ have to be related as stated by the fundamental theorem, Theorem \ref{thm:fund}. Analogously to the proof of Theorem 1 one can use this to show that $g$ has to correspond to an $N$-cycle in $G_{A,B}$.
\qed
}

\section{Symmetries of examples in the main text}
\label{sec:syms}

In this section we derive the symmetries of the states presented in the main text. We denote the local symmetry group of $\Psi(A)$ by
\begin{align}
 S_{\Psi(A)} \equiv \{S = s_1 \otimes \ldots \otimes s_N \ | \ S\ket{\Psi(A)} = \ket{\Psi(A)}\}. \label{eq:symgroup}
\end{align}

\subsection{Symmetries of injective MPS}

For injective MPS we use decomposition (\ref{eq:dec1}) for the fiducial state and the fact that $\mathcal{A}^{-1}$ exists if the MPS is injective. Moreover, we use that the maximally entangled state defined after Eq. (\ref{eq:dec1}) satisfies the following equation for any $x$,
\begin{align}
 (\one \otimes x)\ket{\Phi^+} = (x^T \otimes \one)\ket{\Phi^+}.
\end{align}
Using these properties it is straightforward to verify that
\begin{align}
 G_A = \{s_{x,y} \otimes x \otimes y^T | x,y \in GL(D, \mathbb{C})\}, \label{eq:syminj1}
\end{align}
where $s_{x,y} = \mathcal{A}({x^T}^{-1} \otimes y^{-1})\mathcal{A}^{-1}$. Clearly, the symmetry $s_{x,y} \otimes x \otimes y^T$ can be connected to any symmetry $s_{y^{-1},z} \otimes y^{-1} \otimes z^T$, where $z \in GL(D,\C)$ is arbitrary. Using this in combination with Theorem 1 yields
\begin{align}
 S_{\Psi(A)} = \left\{s_{x_N^{-1},x_1} \otimes \ldots \otimes s_{x_{N-1}^{-1},x_N}\right\}_{x_1,\ldots,x_N \in GL(D,\C)}.
\end{align}

\subsection{Symmetries of the AKLT state}
The AKLT state is generated by the matrices \cite{Affleck1987}
\begin{align}
A_A^0 = \sqrt{2} \begin{pmatrix} 0 & 0 \\ 1 & 0 \end{pmatrix}, \ A_A^1 = \begin{pmatrix} -1 & 0 \\ 0 & 1 \end{pmatrix}, \  A_A^2 = \sqrt{2} \begin{pmatrix} 0 & -1 \\ 0 & 0 \end{pmatrix}.
\end{align}
We use that
\begin{align}
s \otimes x \otimes y^T \in G_{A_A} \ \Leftrightarrow \ x A_A^i y = \sum_{j = 0}^2(s^{-1})_{ij} A_A^j \ \forall i, \label{eq:AKLT1}
\end{align}
where $(s^{-1})_{ij}$ denotes the entries of $s^{-1}$. We can then take the trace on the right-hand side of Eq. (\ref{eq:AKLT1}) and use that the matrices $A_A^j$ are traceless to obtain the following equation,
\begin{align}
 \tr(A_A^i yx) = \sum_{j}(s^{-1})_{ij} \tr(A_A^j) = 0 \ \forall i. \label{eq:AKLT2}
\end{align}
Note further that $(\one, A_A^0, A_A^1, A_A^2)$ forms an orthogonal basis of all 2-by-2 matrices. Thus, Eq. (\ref{eq:AKLT2}) implies that $y = \frac{1}{\lambda} x^{-1}$ for some $\lambda \neq 0$. Inserting this into the right-hand side of (\ref{eq:AKLT1}) yields
\begin{align}
 x A_A^i x^{-1} = \lambda \sum_{j}(s^{-1})_{ij} A_A^j \label{eq:AKLT3}
\end{align}
We can absorb $\lambda$ in the definition of $s$ and thus set $\lambda = 1$, without loss of generality. Since $(A_A^0, A_A^1, A_A^2)$ is a basis of all traceless 2-by-2 matrices one can then find, for any regular $x$, a regular $s = s_x$ such that Eq. (\ref{eq:AKLT3}) holds.
Summarizing, this shows that $G_A = \{s_x \otimes x^{-1} \otimes x^T\}_{x \in GL(2,\C)}$. Using Theorem 1 then further shows that
\begin{align}
 S_{\Psi(A_A)} = \left\{s_x^{\otimes N} \right\}_{x \in GL(2,\C)}. \label{eq:AKLT4}
\end{align}

Note that the following observation holds.

\begin{observation}
 The symmetry group $S_{\Psi(A_A)}$ is isomporphic to the projective linear group $PGL(2,\C)$.
\end{observation}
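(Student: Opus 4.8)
The plan is to realize $S_{\Psi(A_A)}$ as the image of an explicit group homomorphism out of $GL(2,\C)$ and then read off the isomorphism from the first isomorphism theorem. Concretely, I would consider the map $\phi : GL(2,\C) \to S_{\Psi(A_A)}$ defined by $\phi(x) = s_x^{\otimes N}$, where $s_x$ is the $3\times 3$ matrix fixed by the defining relation (\ref{eq:AKLT3}), namely $x A_A^i x^{-1} = \sum_j (s_x^{-1})_{ij} A_A^j$. By the characterization (\ref{eq:AKLT4}) this map is surjective by construction, so the entire task reduces to showing that $\phi$ is a homomorphism and to computing its kernel.

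First I would verify the homomorphism property. Writing the defining relation for $x$ and $x'$ and nesting the two conjugations gives $(xx')A_A^i(xx')^{-1} = \sum_j (s_{x'}^{-1}s_x^{-1})_{ij}A_A^j$, and since $(A_A^0,A_A^1,A_A^2)$ is a basis of the traceless matrices this forces $s_{xx'} = s_x s_{x'}$; tensoring $N$ copies yields $\phi(xx') = \phi(x)\phi(x')$. The same computation shows that the auxiliary map $\psi : x \mapsto s_x$ is a homomorphism $GL(2,\C)\to GL(3,\C)$ whose kernel consists of the $x$ commuting with every $A_A^i$; because $(\one, A_A^0,A_A^1,A_A^2)$ spans all $2\times 2$ matrices, this kernel is exactly the center $Z = \C^\times\one$, so $\operatorname{im}\psi \cong PGL(2,\C)$.

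The substance is then the kernel of $\phi$, and this is where I expect the only genuine obstacle. Since $\phi = \Delta\circ\psi$ with $\Delta(M)=M^{\otimes N}$ and $\ker\psi = Z$, it suffices to show that $\Delta$ is injective on $\operatorname{im}\psi$, i.e.\ that the only $s_x = \mu\one$ with $\mu^N=1$ is $s_x=\one$. I would split this into two sub-steps. The first is a tensor-power lemma: for invertible $M$, $M^{\otimes N}=\one$ forces $M=\mu\one$ with $\mu^N=1$ (diagonalize $M$ and compare the eigenvalue products $\mu_{i_1}\cdots\mu_{i_N}=1$, which collapses all eigenvalues to one common $N$-th root of unity). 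The second, more delicate sub-step rules out $\mu\neq1$: from $s_x=\mu\one$ relation (\ref{eq:AKLT3}) gives $xA_A^ix^{-1}=\mu^{-1}A_A^i$, and since conjugation preserves the spectrum $\{-1,1\}$ of $A_A^1=\operatorname{diag}(-1,1)$ we get $\mu^{-1}\in\{1,-1\}$; the case $\mu^{-1}=-1$ is then excluded because conjugation respects commutators whereas the sign flip does not (for instance $[A_A^1,A_A^0]=2A_A^0$ would be sent to $-2A_A^0$ and to $+2A_A^0$ at once), or equivalently because $\operatorname{Ad}_x$ lies in $SO(3,\C)$ while $-\Id$ does not. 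Hence $\mu=1$, $x\in Z$, and $\ker\phi = Z$.

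Combining these, the first isomorphism theorem gives $S_{\Psi(A_A)} = \operatorname{im}\phi \cong GL(2,\C)/Z = PGL(2,\C)$, which is the desired statement. The main obstacle is the kernel computation, and within it the step excluding $\mu=-1$; the homomorphism and surjectivity parts are routine bookkeeping once (\ref{eq:AKLT3}) and (\ref{eq:AKLT4}) are available.
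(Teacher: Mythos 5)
Your proof is correct, and at its core it coincides with the paper's: the decisive step in both is the same commutant argument --- $s_x = s_y$ (equivalently $s_x = \one$ in your kernel formulation) forces $y^{-1}x$ (resp.\ $x$) to commute with every $A_A^i$, and since $(\one, A_A^0, A_A^1, A_A^2)$ spans all of $\C^{2\times 2}$ this makes it a multiple of $\one$. The difference lies in what is built around that step. The paper proves only the equivalence $s_x = s_y \Leftrightarrow x \propto y$ and stops there; it does not check that $x \mapsto s_x$ is a homomorphism, and, more importantly, it silently identifies equality of the actual symmetry operators, $s_x^{\otimes N} = s_y^{\otimes N}$, with the equality $s_x = s_y$. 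That identification is a genuine (if small) gap: a priori $s_x^{\otimes N} = s_y^{\otimes N}$ only yields $s_x = \mu\, s_y$ with $\mu^N = 1$. Your organization via $\phi = \Delta \circ \psi$ and the first isomorphism theorem is exactly what closes it: the homomorphism check $s_{xx'} = s_x s_{x'}$ is routine, and your exclusion of a nontrivial phase $\mu$ --- via the spectrum of $A_A^1$ together with the commutator identity $[A_A^1, A_A^0] = 2A_A^0$ (or, equivalently, the fact that $\mathrm{Ad}_x$ lies in $SO(3,\C)$ while $-\one$ does not) --- supplies the injectivity of $[x] \mapsto s_x^{\otimes N}$, which the paper's argument, read literally, establishes only for $[x] \mapsto s_x$. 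So your route buys strictly more rigor at modest extra cost. One small repair to your tensor-power lemma: you may not assume $M$ diagonalizable at the outset; however, the eigenvalue-product comparison still forces all eigenvalues of $M$ to equal a common $N$-th root of unity $\mu$, and writing $M = \mu(\one + J)$ with $J$ nilpotent, expanding $M^{\otimes N} = \one$ forces $J = 0$, so the conclusion of the lemma stands as you use it.
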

\proof{
We have to show that the following is satisfied for any regular $x,y$
\begin{align}
 s_x = s_y \Leftrightarrow x \propto y.
\end{align}
From Eq. (\ref{eq:AKLT3}) (recall that we have set, w.l.o.g., $\lambda = 1$)  it is easy to see that $x \propto y$ implies $s_x = s_y$. To show that also the reverse holds suppose that $s_x = s_y$ holds for some regular $x,y$. Then Eq. (\ref{eq:AKLT3}) (again with $\lambda = 1$) implies that $x A_A^i x^{-1} = y A_A^i y^{-1}$ for all $i$, which is equivalent to 
\begin{align}
 y^{-1}x A_A^i = A_A^iy^{-1}x \ \forall i.
\end{align}
This shows that $y^{-1}x$ commutes with all $A_A^i$. Since $(\one, A_A^0, A_A^1, A_A^2)$ forms a basis of all 2-by-2 matrices this shows that $y^{-1}x \propto \one$. \qed
}

\subsection{Symmetries of AKLT-type states}
\label{sec:symAKLTtype}
The AKLT-type states are generated by the fiducial state $\ket{A_{A,g}} = (\one \otimes g \otimes \one) \ket{A_A}$, where $g \in GL(2,\C)$ is such that the resulting state is normal. As noted in the main text we have
\begin{align}
 G_{A_{A,g}} = (\one \otimes g \otimes \one) \cdot G_A \cdot (\one \otimes g^{-1} \otimes \one) = \{h_x \equiv s_x \otimes gx^{-1}g^{-1} \otimes x^T\}_{x \in GL(2,\C)}. \label{eq:defAKLT1}
\end{align}
Two operators $h_x,h_y \in G_{A_{A,g}}$ can be concatenated iff $y \propto g^{-1}xg$. Hence, the operators $h_{x_1},\ldots,h_{x_N}$ form an $N$-cycle iff
\begin{align}
x_{k+1} \propto g^{-1}x_k g \ \forall k
\end{align}
where $x_{N+1} \equiv x_1$. This is fulfilled for an $x \equiv x_1$ iff $x = g^{-N}xg^N$. Using Theorem 1 this yields
\begin{align}
S_{\Psi(A_{A,g})} \equiv \{s_x \otimes s_{g^{-1}xg} \otimes \ldots \otimes s_{g^{-(N-1)}xg^{N-1}} \ | \ x \in GL(2,\C), x = g^{-N}xg^N\}.
\end{align}

\subsection{Cluster state and W-generated states}
We refer the reader to Section \ref{sec:SLOCC}, where we characterize the SLOCC classes and the local symmetries of all normal MPS with $d = D = 2$.

\section{Transformations of examples in the main text}

In this section we derive the transformations of the states presented in the main text.

\subsection{From Injective MPS to other MPS}
We again use decomposition (\ref{eq:dec1}) and the fact that $\mathcal{A}^{-1}$ exists for injective MPS. For an injective MPS $\Psi(A)  \in \mathcal{N}_{N,D}$ and an arbitrary $\Psi(B) \in \mathcal{N}_{N,D}$ it is then  straightforward to see that $(\mathcal{B}\mathcal{A}^{-1} \otimes \one \otimes \one) \in G_{A,B}$ forms a $1$-cycle and thus the transformation $A \Nto B$ can be achieved via a global operation as
\begin{align}
 (\mathcal{B}\mathcal{A}^{-1})^{\otimes N} |\Psi(A)\rangle = |\Psi(B)\rangle.
\end{align}
Combining Theorem 1 with the fact that $G_{A,B} = (\mathcal{B}\mathcal{A}^{-1} \otimes \one \otimes \one) \cdot G_A$, where $G_A$ is given in Eq. (\ref{eq:syminj1}), it is easy to see that there are also infinitely many non-TI  operations that achieve the transformation $A \Nto B$.

\subsection{From the AKLT state to the cluster state}
\label{sec:trafos}
Let us now determine when the AKLT state can be transformed into the cluster state.
The cluster state is generated by the fiducial state $\ket{A_{Cl}} = (\one \otimes H \otimes \one) \ket{GHZ}$, where $H = \sum_{i,j = 0}^1 (-1)^{ij} \ket{i}\bra{j}$ and $\ket{GHZ} = \frac{1}{\sqrt{2}}(\ket{000} + \ket{111})$ is the three-qubit GHZ state. Note that we can write
\begin{align}
 G_{A_A,A_{Cl}} = (\one \otimes H \otimes \one) \cdot G_{A_A,GHZ}. \label{eq:AKLTcluster1}
\end{align}
Let us first determine $G_{A_A,GHZ}$. To this end, note that $G_{A_A,GHZ} \subset \C^{2 \times 3} \otimes GL(2,\C) \otimes GL(2,\C)$, where we have used that $A_A$ and the GHZ state are both tripartite entangled and, therefore, the operators on the bond dimensions have to be invertible. Note further that any $x \in \C^{2 \times 3}$ can be expressed as $x = z M$, where $z \in GL(2,\C)$ and $M$ is a 2-by-3 matrix in reduced row-echelon form \cite{Horn2013}, i.e., is an element of the set
\begin{align}
 E_{2,3} \equiv \left\{M_1(\alpha,\beta) \equiv \begin{pmatrix} 1 & 0 & \alpha\\ 0 & 1 & \beta \end{pmatrix}\right\}_{\alpha,\beta \in \C} \ \cup \ \left\{M_2 \equiv \begin{pmatrix} 1 & 0 & 0 \\ 0 & 0 & 1 \end{pmatrix}, \ M_3 \equiv \begin{pmatrix} 0 & 1 & 0 \\ 0 & 0 & 1 \end{pmatrix} \right\}.
\end{align}
Hence, we can write $h \in G_{A_A,GHZ}$ as $h = w(M \otimes \one \otimes \one)$, where $w \in GL(2,\C)^{\otimes 3}$ and $M \in E_{2,3}$. A necessary condition for $h \in G_{A_A,GHZ}$ obviously is that $h\ket{A_A}$ is a state in the SLOCC class of the GHZ state.  Recall that a general three-qubit state $\ket{\psi} = \ket{0}\ket{\phi_0} + \ket{1}\ket{\phi_1}$, with $\ket{\phi_i} \in \C^2 \otimes \C^2$, is an element of the GHZ class iff its three-tangle is non-vanishing \cite{Coffman2000}, i.e., iff
\begin{align}
 \tau_3(\psi) = \left \lvert \text{det}\begin{pmatrix} \bra{\phi_0^*}\sigma_2^{\otimes 2} \ket{\phi_0} & \bra{\phi_0^*}\sigma_2^{\otimes 2} \ket{\phi_1}\\
 \bra{\phi_1^*}\sigma_2^{\otimes 2} \ket{\phi_0} & \bra{\phi_1^*}\sigma_2^{\otimes 2} \ket{\phi_1} \end{pmatrix}\right \rvert \ \neq 0.
\end{align}
Here, $\ket{\phi^*}$ denotes the complex conjugate of the state $\ket{\phi}$ in the computational basis. Moreover, for any $t \in GL(2,\C)^{\otimes 3}$ we have that $\tau(t\ket{\psi}) \neq 0$ iff $\tau(\psi) \neq 0$.

Hence, $h = w(M \otimes \one \otimes \one) \in G_{A_A,GHZ}$ has to fulfill
\begin{align}
 \tau(M \otimes \one \otimes \one \ket{A_A}) \neq 0. \label{eq:tangle1}
\end{align}
Inequality (\ref{eq:tangle1}) is satisfied iff $M = M_1(\alpha,\beta)$ with $\alpha \neq -\frac{\beta^2}{2}$ or $M = M_2$. In particular, it is not fulfilled for $M = M_3$, such that we no longer have to consider this case. For matrices $M$ that fulfill inequality (\ref{eq:tangle1}) it is straightforward to find a $w \in GL(2,\C)^{\otimes 3}$ such that  $h = w(M \otimes \one \otimes \one) \in G_{A_A,GHZ}$. In this way, we arrive at the following operators of $G_{A_A,GHZ}$,
\begin{align}
&h_1(\alpha,\beta) = a(\alpha,\beta)M_1(\alpha,\beta) \otimes b(\alpha,\beta) \otimes c(\alpha,\beta), \ \text{for} \ \alpha \neq -\frac{\beta^2}{2},\\
&h_2 = \frac{1}{\sqrt{2}} M_2 \otimes \sigma_1 \otimes \sigma_3,
\end{align}
where
\begin{align*}
&a(\alpha,\beta) \equiv
 \begin{pmatrix}
 1 & \beta-\sqrt{2 \alpha + \beta^2} \\
 1 & \beta +\sqrt{2 \alpha +\beta^2} \\
\end{pmatrix},\\
&b(\alpha,\beta) \equiv
\begin{pmatrix}
 -\frac{1}{2\sqrt{2}} \frac{1}{2\alpha + \beta^2} & \frac{1}{4} \frac{-\beta - \sqrt{2\alpha + \beta^2}}{2\alpha + \beta^2} \\
 1 & \frac{1}{\sqrt{2}} (b - \sqrt{2\alpha + \beta^2}) \\
\end{pmatrix},\\
&c(\alpha,\beta) \equiv
\begin{pmatrix}
 \frac{1}{\sqrt{2}} (-b - \sqrt{2\alpha + \beta^2}) & 1 \\
 \frac{1}{4} \frac{\beta - \sqrt{2\alpha + \beta^2}}{2\alpha + \beta^2} & -\frac{1}{2\sqrt{2}} \frac{1}{2\alpha + \beta^2} \\
\end{pmatrix}.
\end{align*}
We obtain the whole set $G_{A_A,GHZ}$ by multiplying these operators from the left with the local symmetry group of the GHZ state, which reads \cite{Verstraete2002b}
\begin{align}
 G_{GHZ} = \left\{s_{GHZ}^{(i,x,y)} = \sigma_1^i P_{\frac{1}{xy}} \otimes \sigma_1^{i} P_{x} \otimes \sigma_1^iP_{y} \right\}_{(i,x,y) \in \{0,1\} \times {\C^{\times}}^2} \label{eq:GHZsymm},
\end{align}

with $P_z = \text{diag}(z,1/z)$. Combining this with Eq. (\ref{eq:AKLTcluster1}) we obtain,
\begin{align}
 G_{A_A,A_{Cl}} = G_{A_A, Cl}^{(1)} \cup  G_{A_A, Cl}^{(2)},
\end{align}
with
\begin{align}
 G_{A_A, Cl}^{(1)} = (\one \otimes H \otimes \one)\cdot G_{GHZ} \cdot \left\{h_1(\alpha,\beta) \ | \ \alpha \neq -\frac{\beta^2}{2}\right\}, \ G_{A_A,Cl}^{(2)} = (\one \otimes H \otimes \one) \cdot G_{GHZ} \cdot h_2.
\end{align}
Due to Theorem 2 it now only remains to determine the cycles that can be obtained by concatenating elements of $G_{A_A,A_{Cl}}$. It is straightforward to see that elements of $G_{A_A, A_{Cl}}^{(k)}$, for $k \in \{1,2\}$, cannot be concatenated with each other. However, an element of $G_{A_A, A_{Cl}}^{(1)}$ can be concatenated with an element of $G_{A_A, A_{Cl}}^{(2)}$ to form a 2-cycle. The only way to obtain an $N$-cycle is therefore to alternatingly concatenate elements from $G_{A_A, A_{Cl}}^{(1)}$ and $G_{A_A, A_{Cl}}^{(2)}$; which is possible iff $N$ is even. This proves that $A_A \Nto A_{Cl}$ iff $N$ is even.

Let us note that the method presented here can also be used to determine \emph{all} MPS with $d = D = 2$ to which the AKLT state can be transformed.

\subsection{From the AKLT state to AKLT-type states}
Let us determine when the transformation $A_A \Nto A_{A,g}$ from the AKLT state to an AKLT-type state is possible. Note first that
\begin{align}
 G_{A_A,A_{A,g}} = (\one \otimes g \otimes \one) \cdot G_{A_A} = \{h_x = s_x \otimes gx^{-1} \otimes x^T | x \in GL(2,\C\},
\end{align}
where $s_x$ was defined in Section \ref{sec:symAKLTtype}. The operators $h_{x_1},\ldots,h_{x_N}$ form an $N$-cycle iff
\begin{align}
x_{k+1} \propto x_kg \ \forall k,
\end{align}
where $x_{N+1} \equiv x_1$. This is fulfilled for any $x_1$ iff $g^N \propto \one$. Using Theorem 1 we see that the following holds.
\begin{align}
 A_{A_A} \Nto A_{A,g} \ \Leftrightarrow \ g^N \propto \one
\end{align}

\section{Symmetries and SLOCC classification for MPS with $d = D = 2$}
\label{sec:SLOCC}
It is straightforward to show that MPS generated by (bi-)separable three-qubit states are product states (i.e. they have bond dimension $D = 1$). Hence, we only have to consider MPS generated by genuinely tripartite entangled three-qubit states, which are either an element of the GHZ class or the W class \cite{Dur2000}. As explained in the main text, it is sufficient to determine when normal MPS generated by fiducial states of the form
\begin{align}
&\ket{GHZ_b} = \one \otimes b \otimes \one\ket{GHZ}, \ \text{i.e., with matrices} \ A_{GHZ,b}^0 = b\ket{0}\bra{0}, \ A_{GHZ,b}^1 = b\ket{1}\bra{1}, \ \text{or} \label{eq:simplefid1}\\
&\ket{W_b} = \one \otimes b \otimes \one\ket{W}, \ \text{i.e., with matrices} \ A_{W,b}^0 = b(\ket{0}\bra{1} + \ket{1}\bra{0}), \ A_{W,b}^1 = b\ket{0}\bra{0} \label{eq:simplefid2},
\end{align}
 are related via transformations that are not global. The whole classification is obtained by adding the states that are related to those states via global operations.

In order to characterize the local symmetry group of all normal MPS (see Eq. (\ref{eq:symgroup})) we can use the following property. For $A \Nsim B$ there exists, by definition, an invertible local operator $g$ such that $\ket{\Psi(B)} = g\ket{\Psi(A)}$ and it is straightforward to see that
\begin{align}
S_{\Psi(B)} = g S_{\Psi(A)} g^{-1}. \label{eq:conjsym}
\end{align}
Hence, it is sufficient to find the symmetries of one representative of an SLOCC class, $\Psi(A)$. Concretely, this means that it is sufficient to characterize the symmetries of MPS of the form (\ref{eq:simplefid1} - \ref{eq:simplefid2}).

In order to find the symmetries and SLOCC classes of these MPS we proceed in three steps:
\begin{enumerate}
 \item Determine for which $b$ the state $\Psi(X_b)$ is normal.
 \item Characterize the symmetries of the normal MPS using Theorem 1.
 \item Characterize the SLOCC classes of the states $\Psi(X_b)$ using Corollary 1 of the main text. To simplify this procedure, we can use that MPS with different injectivity lengths cannot be SLOCC equivalent (this follows from Lemma \ref{lem:stayinj}). Moreover, MPS whose symmetry groups are not conjugate to each other, i.e., do not fulfill Eq. (\ref{eq:conjsym}) for some $g$, can also never be SLOCC equivalent.
\end{enumerate}

\begin{table}[h!]
 \begin{tabular}{c|c|c|c|c}
 type &  \# symmetries & inj. length & \# SLOCC classes & $A \Nsim B$\\
 \hline
  GHZ &$2$ &  $2$ & $\infty$ (generic set) & $GHZ_b \Nsim GHZ_c \ \Leftrightarrow \chi(b) = \begin{cases} \chi(c) \ \text{or} \ \chi(c)^{-1}, \ \text{$N$ even}\\
\chi(c), \ \text{$N$ odd}.\end{cases}$\\ \cline{2-5}
  & $2^N$  & $2$ & $1$ (cluster set) & always \\ \cline{2-5}
  & $1$ &$3$ & $1$ (symmetryless set) & always\\ \cline{2-5}
  & $\infty$ & not normal & $1$ (GHZ$_N$ class) & always\\
 \hline
 W & $1$ for odd $N$& $2$ & $1$ & always\\
  & $\infty$ for even $N$ & & &
 \end{tabular}
 \caption{The SLOCC classification of MPS with $d = D = 2$. First, according to the SLOCC class of the generating three-qubit state. For GHZ-generated states one can further coarse grain the classes according to their local symmetries into different sets. We also provide the minimal number of qubits that have to be blocked to make the normal states injective. The only multipartite entangled non-normal states are all SLOCC equivalent to the non-normal state generated by the three-qubit GHZ state, i.e., they are elements of the GHZ$_N$ class. We state how many different SLOCC classes there are within one set and depict when two MPS within this set are SLOCC equivalent. The function $\chi$ is defined in Eq. (\ref{eq:chi}) (see also main text). Note that the class with two local symmetries is of full measure in the set of all MPS with $d = D = 2$.}
  \label{tab:2}
\end{table}

The resulting symmetry characterization and SLOCC classification is concisely summarized in Table \ref{tab:2}, which is an extended version of Table I in the main text. Let us note here that $\Psi(GHZ_{\one}) = \ket{GHZ_N} \equiv \frac{1}{\sqrt{2}} (\ket{0}^{\otimes N} + \ket{1}^{\otimes N})$ is the $N$-qubit GHZ state. This state is not normal and thus the methods of the main text do not directly apply to it. However, the symmetries of $GHZ_N$ are known \cite{Verstraete2002b}. Moreover, we show below that all non-normal multipartite entangled MPS are SLOCC equivalent to $GHZ_N$. Although the SLOCC and symmetry classification of general non-normal MPS is not within the scope of the main text, we could thus determine it for the special case of $d = D = 2$. Combined with the results on normal MPS we therefore obtain here a characterization of the symmetries and SLOCC classes of all multipartite entangled MPS with $d = D = 2$.\\

In the following we provide a detailed presentation and derivation of these results. We first consider the GHZ- (Section \ref{sec:GHZgenerated}) and then the W-generated states (Section \ref{sec:Wgenerated}).

\subsection{GHZ-generated MPS}
\label{sec:GHZgenerated}
\subsubsection{Characterization of the normal MPS}
\label{sec:normalGHZ}
We first determine when $\Psi(GHZ_b)$ is normal, where $b = (b_{ij}) \in GL(2,\C)$. That is, we have to check for which $b \in GL(2,\C)$ there is an $L$ such that
\begin{align}
 \underset{i_1,\dots,i_L}{\mathrm{span}}\left\{A_{GHZ,b}^{i_1}\cdot \ldots \cdot A_{GHZ,b}^{i_L}\right\} = \C^{2 \times 2}.
\end{align}
Here, we determine the minimal $L$ with this property, i.e., the injectivity length of $\Psi(W_b)$. It is straightforward to see that we have to distinguish four different cases:
\begin{itemize}
 \item[(i)] $b_{ij} \neq 0$ for all $i,j$: $L = 2$ and thus the MPS is normal for $N \geq 5$. Note that the states related to MPS of this case via (trivial) global operations are generated by fiducial states of the form $g \otimes x^{-1}b \otimes x^T\ket{GHZ}$ (as shown in the main text),  where $g,x$ are arbitrary regular matrices. Since $b$ is a generic regular matrix (for this case) these fiducial states comprise a generic set of three-qubit states. Hence, the MPS corresponding (up to global operations) to this case are generated by a full measure set of three-qubit states and are thus of full measure in the set of all MPS with $d = D = 2$.
 \item[(ii)] exactly one entry of $b$ is zero:
	    \begin{itemize}
	      \item[(iia)]$b_{kk} = 0$ for exactly one $k \in \{0,1\}$: $L = 3$ and thus the MPS is normal for $N \geq 7$,
	      \item[(iib)]$b_{01} = 0$ or $b_{10} = 0$: The MPS is not normal for any $N$ and SLOCC equivalent to $\ket{GHZ_N}$.
	    \end{itemize}
 \item[(iii)] exactly two entries of $b$ are zero: The MPS is either SLOCC equivalent to $\ket{GHZ_N}$ or vanishes and is therefore not normal.
\end{itemize}

In particular, this shows that normal GHZ-generated MPS have an injectivity length of at most 3 (in case (iia)) and generically (i.e., in case (iiia)) of 2. This is considerably below the best known upper bound (to the knowledge of the authors) of $L \leq 2D^2(6+\log_2(D))$ for the injectivity length of a normal MPS with physical dimension $d$ and bond dimension $D$ \cite{Michalek2018}. For $D = 2$ this bound states $L \leq 56$.

\subsubsection{Characterization of the local symmetries}
\label{sec:stabGHZ}
In the following we determine the local symmetries of the normal GHZ-generated MPS determined before (i.e., of states belonging to the cases (i) and (iia) in the last section). Note that the symmetries of the three-qubit GHZ state are given in Eq. (\ref{eq:GHZsymm}). The stabilizer of the GHZ-type state $\ket{GHZ_b} = \one \otimes b \otimes \one \ket{GHZ}$ hence reads
\begin{align*}
G_{GHZ_b} = \left\{s^{(k,x,y)} =  (\one \otimes b \otimes \one)s_{GHZ}^{(k,x,y)}(\one \otimes b^{-1} \otimes \one)\right\}_{(k,x,y) \in \{0,1\} \times \C^\times}.
\end{align*}
Two elements  $s^{(k,v,w)}, s^{(l,x,y)} \in G_{GHZ_b}$ can be concatenated iff
\begin{align}
P_{w}\sigma_1^k b\sigma_1^{l} P_{x}b^{-1} = r \one, \label{eq:dir3qbGHZ}
\end{align}
for some $r \neq 0$, where $P_z = \text{diag}(z,1/z)$.
This condition is extremely restrictive and it is easy to find the minimal cycles in $G_{GHZ_b}$ entailed by it. We can simply read off the resulting symmetries (as explained in the main text). This yields the following stabilizer for the cases (i) and (iia) found in Section \ref{sec:normalGHZ}.
\begin{itemize}
 \item[(i)] In solving Eq. (\ref{eq:dir3qbGHZ}) the function
 \begin{align}
 \chi(b) = \frac{b_{00} \cdot b_{11}}{b_{01} \cdot b_{10}}. \label{eq:chi}
 \end{align}
 plays a prominent role. More precisely, $\chi$ can be used to further distinguish the MPS in this case according the following subcases:
 \begin{itemize}
 \item[(ia)] $\chi(b) \neq -1,0$: Then Eq. (\ref{eq:dir3qbGHZ}) only has solutions if $k = l$ and they depend on $b$. For $k = l = 0$ we get $w = x = \pm 1$. For $k = l = 1$ we get $w^2 = \frac{b_{00}b_{01}}{b_{10}b_{11}}$ and $x =  \frac{b_{10}}{b_{01}} w$, $r = 1$. There is only one nontrivial cycle in $G_{GHZ_b}$, which has length 1. Hence, besides the trivial symmetry, the state $\Psi(GHZ_b)$ has one nontrivial symmetry and its stabilizer reads
 \begin{align}
  S_{\Psi(GHZ_b)} = \left\{\one^{\otimes N}, \left(\sigma_1P_{\frac{b_{11}}{b_{00}}}\right)^{\otimes N}\right\},
 \end{align}
 for $N \geq 5$.
 \item[(ib)] $\chi(b) = -1$: Equation (\ref{eq:dir3qbGHZ}) has the following solutions: $r = w = x = 1$ for $k = l = 0$; $r = i, w = i, x = \frac{b_{00}}{b_{01}}$ for $k = 0, l= 1$; $r = i, w = \frac{b_{00}}{b_{10}}, x = 1$ for $k = 1, l = 0$; $r = 1, w = \frac{i b_{00}}{b_{10}}, x = \frac{b_{10}}{b_{01}}$ for $k = l = 1$. Hence, there are many ways to connect elements in $G_{GHZ_b}$. They give rise to $2^N$ different $N$-cycles that each lead to a local symmetry of $\ket{\Psi(GHZ_b)}$. Note that the linear cluster state (with periodic boundary conditions) reads $\ket{Cluster} \equiv \ket{\Psi(GHZ_H)}$, where $H = \sum_{i,j = 0}^1 (-1)^{ij} \ket{i}\bra{j}$. For the cluster state we find that the local symmetries are exactly given by its stabilizer symmetries \cite{Hein2006}, i.e.,
 \begin{align}
  S_{Cluster} = S_{\Psi(GHZ_H)} = \left\{K_1^{i_1} \cdot \ldots \cdot K_N^{i_N}\right\}_{i_1,\ldots,i_N \in \{0,1\}}. \label{eq:symcluster}
 \end{align}
Here, $K_i= \sigma_3^{(i-1)} \sigma_1^{(i)} \sigma_3^{(i+1)}$ acts as $\sigma_1$ on qubit $i$ and as $\sigma_3$ on qubits $i-1, i$ (with periodic boundary conditions) and as the identity on all other qubits.
In fact, we see in Section \ref{sec:GHZSLOCC} below that all states with $\chi(b) = -1$ are SLOCC equivalent, such that we call this set of states the \emph{cluster set}. The symmetries of all states in this set can thus also be easily obtained from the symmetries (\ref{eq:symcluster}) of the cluster state via Eq. (\ref{eq:conjsym}).
 \end{itemize}
 \item[(iia)] These states fulfill $\chi(b) = 0$. There only exists a solution of Eq. (\ref{eq:dir3qbGHZ}) for $k = l = 0$ and $w,x = \pm 1$. This results in a trivial stabilizer, i.e.,
 \begin{align}
  S_{\Psi(GHZ_b)} = \{\one\},
 \end{align}
 for $N \geq 7$ (as the injectivity length of these states is $L = 3$).
\end{itemize}

\subsubsection{Characterization of the SLOCC classes}
\label{sec:GHZSLOCC}
From the results of the previous section we conclude that normal GHZ-generated states can be separated into three different sets according to their symmetries, where states from different sets are in different SLOCC classes:
\begin{enumerate}
\item $\chi(b) \neq -1,0$ (case (ia) of Section \ref{sec:stabGHZ}): These states have only 1 nontrivial symmetry, which is global. We call this set the generic set as it contains almost all MPS.
\item $\chi(b) = -1$ (case (ib) of Section \ref{sec:stabGHZ}): These states have $2^N$ symmetries. This set contains the cluster state and thus we refer to it as the cluster set.
\item $\chi(b) = 0$ (case (iia) of Section \ref{sec:stabGHZ}): These states have only the trivial symmetry and thus we refer to this set as the symmetryless set.
\end{enumerate}
In the following we determine the SLOCC classes within these sets.
Using the symmetries (\ref{eq:GHZsymm}) of the GHZ state and Corollary 1 of the main text this is straightforward and reveals the following SLOCC classification within the sets 1. to 3.:
\begin{enumerate}
 \item First, we determine when $\Psi(GHZ_b)$ and $\Psi(GHZ_c)$ (with $\chi(b),\chi(c) \not\in \{-1,0\}$) are related via a (trivial) global operation. This is the case iff $G_{GHZ}$ contains a $(b\rightarrow c)$-1-cycle. It is straightforward to show that this condition is satisfied iff $\chi(b) = \chi(c)$. Next, we have to determine the MPS that are related via $(b\rightarrow c)$-$N$-cycles with $N > 1$. To this end, we use the procedure explained in the paragraph after Corollary 1 in the main text.
 For two operators
 \begin{align}
  h_1 = g_1 \otimes x_1 \otimes y_1^T \equiv s_{GHZ}^{(k,v_1,v_2)}  \in G_{GHZ}, \label{eq:op1}\\
  h_2 = g_2 \otimes x_2 \otimes y_2^T \equiv s_{GHZ}^{(l,w_1,w_2)} \in G_{GHZ} \label{eq:op2}
 \end{align}
 we define the matrix,
 \begin{align}
  M \equiv y_{\alpha}^{-1} y_\beta \otimes (x_{\alpha} x_\beta^{-1})^T = (P_{v_2}\sigma_1^{k})^{-1} P_{w_2}\sigma_1^l \otimes \left [\sigma_1^k P_{v_1} (\sigma_1^l P_{w_1})^{-1}\right]^T.
 \end{align}
 As explained in the main text, $h_1,h_2$ form a $(b\rightarrow c)$-$2$-cycle iff there exists a $\lambda \neq 0$ such that
 \begin{align}
  M\vec{c} = \lambda \vec{c}. \label{eq:EV}
 \end{align}
 For any $c$ that solves Eq. (\ref{eq:EV}) we can find the corresponding $b$ as
 \begin{align}
  b = y_1cx_2 = P_{v_2} \sigma_1^k c \sigma_1^l P_{w_1},
 \end{align}
 as explained in the main text. In this way, we can show that $\Psi(GHZ_b), \Psi(GHZ_c)$ are related via a nontrivial $(b\rightarrow c)$-$2$-cycle iff $\chi(b) = \frac{1}{\chi(c)}$. Analogously, we can show that $\Psi(GHZ_b), \Psi(GHZ_c)$ are not related via a ($b \rightarrow c)$-$N$-cycle of any size if neither $\chi(b) = \chi(c)$ nor $\chi(b) = \frac{1}{\chi(c)}$ hold. Summarizing, we have just shown the following,
\begin{align}
GHZ_b \Nsim GHZ_c \ \Leftrightarrow \chi(b) = \begin{cases} \chi(c) \ \text{or} \ \chi(c)^{-1} \ \text{and $N$ even} \\
\chi(c) \ \text{and $N$ odd}.\end{cases} \label{eq:char1}
\end{align}
In particular, there are infinitely many, $N$-dependent SLOCC classes within this generic set.\\

Let us also briefly outline an alternative way to derive Eq. (\ref{eq:char1}). For all fixed pairs of matrices $b, c$ (with $\chi(b), \chi(c) \not\in \{-1,0\}$) one could explicitly determine all $h_1,h_2 \in G_{A} $ as in Eqs. (\ref{eq:op1} - \ref{eq:op2}) that are $(b \rightarrow c)$-connected. Note that
\begin{align}
 h_1 \xrightarrow{b \rightarrow c} h_2 \ \Leftrightarrow \ y_2bx_2 \propto c \ \Leftrightarrow \ P_{v_2}\sigma_1^{k} \cdot c \cdot \sigma_1^l P_{w_1} \propto b. \label{eq:connect1}
\end{align}
For fixed $b,c$ there are only very few or no $h_1, h_2$ that solve Eq. (\ref{eq:connect1}). For the $b,c$ for which there are elements of $G_A$ that can be $(b\rightarrow c)$-connected it is then straightforward to find all $(b\rightarrow c)$-$N$-cycles for $N = 1,2$ and show that there are no larger cycles. This then leads to Eq. (\ref{eq:char1}).

 \item All MPS in this symmetry class are related to each other via (trivial) 1-cycles and are thus SLOCC equivalent for any $N$.
 \item All MPS in this symmetry class are related to each other via (trivial) 1-cycles and are thus SLOCC equivalent for any $N$.
\end{enumerate}

\subsection{MPS generated by W-type states}
\label{sec:Wgenerated}

\subsubsection{Characterization of the normal MPS}
\label{sec:normalW}
Analogously to the GHZ case, we first have to determine when $\Psi(W_b)$ is normal. That is, we have to check for which $b \in GL(2,\C)$ there is an $L$ such that
\begin{align}
 \underset{i_1,\dots,i_L}{\mathrm{span}}\left\{A_{W,b}^{i_1}\cdot \ldots \cdot A_{W,b}^{i_L}\right\} = \C^{2 \times 2}.
\end{align}
Here, we determine the minimal $L$ with this property, i.e., the injectivity length of $\Psi(W_b)$. A straightforward calculation shows that the following cases have to be distinguished:
\begin{itemize}
 \item[(i)] $b_{ij} \neq 0$ for all $i,j$: $L = 2$ and thus $\Psi(W_b)$ is normal for $N \geq 5$.
 \item[(ii)] exactly one entry of $b$ is zero:
      \begin{itemize}
	  \item[(iia)] $b_{00} = 0$: $\Psi(W_b) \propto \ket{0}^{\otimes N}$ and, thus, these states are not normal.
	  \item[(iib)] else: $L= 2$ and thus $\Psi(W_b)$ is normal for $N \geq 5$.
      \end{itemize}
 \item[(iii)] exactly two entries of $b$ are zero:
	  \begin{itemize}
	  \item[(iiia)] $b_{01}, b_{10} = 0$: $L = 2$ and thus $\Psi(W_b)$ is normal for any $N \geq 5$.
	  \item[(iiib)] else: $\Psi(W_b)$ is a product state and therefore not normal.
	  \end{itemize}
\end{itemize}

\subsubsection{Characterization of the local symmetries}
The local symmetries of the W state are given by \cite{deVicente2013}
\begin{align}
 S_{W} = \left\{\frac{1}{x} \begin{pmatrix} x & -y-z\\ 0 & \frac{1}{x}\end{pmatrix} \otimes \begin{pmatrix} x & y\\ 0 & \frac{1}{x}\end{pmatrix} \otimes \begin{pmatrix} x & z\\ 0 & \frac{1}{x}\end{pmatrix} \right\}_{(x,y,z) \in {\C^\times}^3} \label{eq:Wsymm},
\end{align}
For the cases (i), (iib) and (iiia) of normal states determined in the last section, $G_{W_b}$ contains the trivial cycle (from $\one$ to $\one$) and a continuous set of nontrivial 2-cycles. Hence, $\Psi(W_b)$ has only the trivial symmetry if $N$ is odd and infinitely many symmetries if $N$ is even. Interestingly, these symmetries have the same form for all W-generated normal MPS, namely
      \begin{align}
       S_{\Psi(W_b)} = \begin{cases} \one  & \text{if $N$ is odd},\\ \left\{\left(z(x) \otimes z(\frac{1}{x})\right)^{\otimes \frac{N}{2}}\right\}_{x \in {\C^\times}}  & \text{if $N$ is even}, \end{cases}
      \end{align}
      where
      \begin{align}
       z(x) = \begin{pmatrix} x & (x-\frac{1}{x}) \frac{b_{01} + b_{10}}{b_{00}}\\ 0 & \frac{1}{x} \end{pmatrix}.
      \end{align}

\subsection{Characterization of the SLOCC classes}
The order of the symmetries of normal W-generated states cannot be used to distinguish SLOCC classes. The reason for this is that all such states are SLOCC equivalent. To see this, we consider the normal MPS $\Psi(W) = \Psi(W_{\one})$ and an arbitrary normal MPS $\Psi(W_c)$. Then $G_A$ contains a $(\one \rightarrow c)$-1-cycle for any such choice of $c$. Hence, $\Psi(W) \Nsim \Psi(W_c)$ holds for any $N$ and $\Psi(W)$ can be transformed to any other normal W-generated MPS via a (trival) global transformation. Consequently, all normal W-generated MPS are in the same SLOCC class.

\bibliography{BibLetter}

\begin{thebibliography}{36}%
\makeatletter
\providecommand \@ifxundefined [1]{%
 \@ifx{#1\undefined}
}%
\providecommand \@ifnum [1]{%
 \ifnum #1\expandafter \@firstoftwo
 \else \expandafter \@secondoftwo
 \fi
}%
\providecommand \@ifx [1]{%
 \ifx #1\expandafter \@firstoftwo
 \else \expandafter \@secondoftwo
 \fi
}%
\providecommand \natexlab [1]{#1}%
\providecommand \enquote  [1]{``#1''}%
\providecommand \bibnamefont  [1]{#1}%
\providecommand \bibfnamefont [1]{#1}%
\providecommand \citenamefont [1]{#1}%
\providecommand \href@noop [0]{\@secondoftwo}%
\providecommand \href [0]{\begingroup \@sanitize@url \@href}%
\providecommand \@href[1]{\@@startlink{#1}\@@href}%
\providecommand \@@href[1]{\endgroup#1\@@endlink}%
\providecommand \@sanitize@url [0]{\catcode `\\12\catcode `\$12\catcode
  `\&12\catcode `\#12\catcode `\^12\catcode `\_12\catcode `\%12\relax}%
\providecommand \@@startlink[1]{}%
\providecommand \@@endlink[0]{}%
\providecommand \url  [0]{\begingroup\@sanitize@url \@url }%
\providecommand \@url [1]{\endgroup\@href {#1}{\urlprefix }}%
\providecommand \urlprefix  [0]{URL }%
\providecommand \Eprint [0]{\href }%
\providecommand \doibase [0]{http://dx.doi.org/}%
\providecommand \selectlanguage [0]{\@gobble}%
\providecommand \bibinfo  [0]{\@secondoftwo}%
\providecommand \bibfield  [0]{\@secondoftwo}%
\providecommand \translation [1]{[#1]}%
\providecommand \BibitemOpen [0]{}%
\providecommand \bibitemStop [0]{}%
\providecommand \bibitemNoStop [0]{.\EOS\space}%
\providecommand \EOS [0]{\spacefactor3000\relax}%
\providecommand \BibitemShut  [1]{\csname bibitem#1\endcsname}%
\let\auto@bib@innerbib\@empty
\bibitem [{\citenamefont {Nielsen}\ and\ \citenamefont
  {Chuang}(2010)}]{Nielsen2010}%
  \BibitemOpen
  \bibfield  {author} {\bibinfo {author} {\bibfnamefont {M.~A.}\ \bibnamefont
  {Nielsen}}\ and\ \bibinfo {author} {\bibfnamefont {I.~L.}\ \bibnamefont
  {Chuang}},\ }\href@noop {} {\emph {\bibinfo {title} {Quantum computation and
  quantum information}}}\ (\bibinfo  {publisher} {Cambridge University Press},\
  \bibinfo {year} {2010})\BibitemShut {NoStop}%
\bibitem [{\citenamefont {Horodecki}\ \emph {et~al.}(2009)\citenamefont
  {Horodecki}, \citenamefont {Horodecki}, \citenamefont {Horodecki},\ and\
  \citenamefont {Horodecki}}]{Horodecki2009}%
  \BibitemOpen
  \bibfield  {author} {\bibinfo {author} {\bibfnamefont {R.}~\bibnamefont
  {Horodecki}}, \bibinfo {author} {\bibfnamefont {P.}~\bibnamefont
  {Horodecki}}, \bibinfo {author} {\bibfnamefont {M.}~\bibnamefont
  {Horodecki}}, \ and\ \bibinfo {author} {\bibfnamefont {K.}~\bibnamefont
  {Horodecki}},\ }\href@noop {} {\bibfield  {journal} {\bibinfo  {journal}
  {Rev. Mod. Phys.}\ }\textbf {\bibinfo {volume} {81}},\ \bibinfo {pages} {865}
  (\bibinfo {year} {2009})}\BibitemShut {NoStop}%
\bibitem [{\citenamefont {Amico}\ \emph {et~al.}(2008)\citenamefont {Amico},
  \citenamefont {Fazio}, \citenamefont {Osterloh},\ and\ \citenamefont
  {Vedral}}]{Amico2008}%
  \BibitemOpen
  \bibfield  {author} {\bibinfo {author} {\bibfnamefont {L.}~\bibnamefont
  {Amico}}, \bibinfo {author} {\bibfnamefont {R.}~\bibnamefont {Fazio}},
  \bibinfo {author} {\bibfnamefont {A.}~\bibnamefont {Osterloh}}, \ and\
  \bibinfo {author} {\bibfnamefont {V.}~\bibnamefont {Vedral}},\ }\href@noop {}
  {\bibfield  {journal} {\bibinfo  {journal} {Rev. Mod. Phys.}\ }\textbf
  {\bibinfo {volume} {80}},\ \bibinfo {pages} {517} (\bibinfo {year}
  {2008})}\BibitemShut {NoStop}%
\bibitem [{\citenamefont {Eisert}\ \emph {et~al.}(2010)\citenamefont {Eisert},
  \citenamefont {Cramer},\ and\ \citenamefont {Plenio}}]{Eisert2010}%
  \BibitemOpen
  \bibfield  {author} {\bibinfo {author} {\bibfnamefont {J.}~\bibnamefont
  {Eisert}}, \bibinfo {author} {\bibfnamefont {M.}~\bibnamefont {Cramer}}, \
  and\ \bibinfo {author} {\bibfnamefont {M.~B.}\ \bibnamefont {Plenio}},\
  }\href {\doibase 10.1103/RevModPhys.82.277} {\bibfield  {journal} {\bibinfo
  {journal} {Rev. Mod. Phys.}\ }\textbf {\bibinfo {volume} {82}},\ \bibinfo
  {pages} {277} (\bibinfo {year} {2010})}\BibitemShut {NoStop}%
\bibitem [{\citenamefont {Nishioka}(2018)}]{Nishioka2018}%
  \BibitemOpen
  \bibfield  {author} {\bibinfo {author} {\bibfnamefont {T.}~\bibnamefont
  {Nishioka}},\ }\href {\doibase 10.1103/RevModPhys.90.035007} {\bibfield
  {journal} {\bibinfo  {journal} {Rev. Mod. Phys.}\ }\textbf {\bibinfo {volume}
  {90}},\ \bibinfo {pages} {035007} (\bibinfo {year} {2018})}\BibitemShut
  {NoStop}%
\bibitem [{\citenamefont {Witten}(2018)}]{Witten2018}%
  \BibitemOpen
  \bibfield  {author} {\bibinfo {author} {\bibfnamefont {E.}~\bibnamefont
  {Witten}},\ }\href {\doibase 10.1103/RevModPhys.90.045003} {\bibfield
  {journal} {\bibinfo  {journal} {Rev. Mod. Phys.}\ }\textbf {\bibinfo {volume}
  {90}},\ \bibinfo {pages} {045003} (\bibinfo {year} {2018})}\BibitemShut
  {NoStop}%
\bibitem [{\citenamefont {Bennett}\ \emph {et~al.}(2000)\citenamefont
  {Bennett}, \citenamefont {Popescu}, \citenamefont {Rohrlich}, \citenamefont
  {Smolin},\ and\ \citenamefont {Thapliyal}}]{Bennett2000}%
  \BibitemOpen
  \bibfield  {author} {\bibinfo {author} {\bibfnamefont {C.~H.}\ \bibnamefont
  {Bennett}}, \bibinfo {author} {\bibfnamefont {S.}~\bibnamefont {Popescu}},
  \bibinfo {author} {\bibfnamefont {D.}~\bibnamefont {Rohrlich}}, \bibinfo
  {author} {\bibfnamefont {J.~A.}\ \bibnamefont {Smolin}}, \ and\ \bibinfo
  {author} {\bibfnamefont {A.~V.}\ \bibnamefont {Thapliyal}},\ }\href {\doibase
  10.1103/PhysRevA.63.012307} {\bibfield  {journal} {\bibinfo  {journal} {Phys.
  Rev. A}\ }\textbf {\bibinfo {volume} {63}},\ \bibinfo {pages} {012307}
  (\bibinfo {year} {2000})}\BibitemShut {NoStop}%
\bibitem [{\citenamefont {Nielsen}(1999)}]{Nielsen1999}%
  \BibitemOpen
  \bibfield  {author} {\bibinfo {author} {\bibfnamefont {M.~A.}\ \bibnamefont
  {Nielsen}},\ }\href {\doibase 10.1103/PhysRevLett.83.436} {\bibfield
  {journal} {\bibinfo  {journal} {Phys. Rev. Lett.}\ }\textbf {\bibinfo
  {volume} {83}},\ \bibinfo {pages} {436} (\bibinfo {year} {1999})}\BibitemShut
  {NoStop}%
\bibitem [{\citenamefont {Vidal}(1999)}]{Vidal1999}%
  \BibitemOpen
  \bibfield  {author} {\bibinfo {author} {\bibfnamefont {G.}~\bibnamefont
  {Vidal}},\ }\href@noop {} {\bibfield  {journal} {\bibinfo  {journal} {Phys.
  Rev. Lett.}\ }\textbf {\bibinfo {volume} {83}},\ \bibinfo {pages} {1046}
  (\bibinfo {year} {1999})}\BibitemShut {NoStop}%
\bibitem [{\citenamefont {Bastin}\ \emph {et~al.}(2015)\citenamefont {Bastin},
  \citenamefont {Mathonet},\ and\ \citenamefont {Solano}}]{Bastin2015}%
  \BibitemOpen
  \bibfield  {author} {\bibinfo {author} {\bibfnamefont {T.}~\bibnamefont
  {Bastin}}, \bibinfo {author} {\bibfnamefont {P.}~\bibnamefont {Mathonet}}, \
  and\ \bibinfo {author} {\bibfnamefont {E.}~\bibnamefont {Solano}},\ }\href
  {\doibase 10.1103/PhysRevA.91.022310} {\bibfield  {journal} {\bibinfo
  {journal} {Phys. Rev. A}\ }\textbf {\bibinfo {volume} {91}},\ \bibinfo
  {pages} {022310} (\bibinfo {year} {2015})}\BibitemShut {NoStop}%
\bibitem [{\citenamefont {Migda{\l}}\ \emph {et~al.}(2013)\citenamefont
  {Migda{\l}}, \citenamefont {Rodriguez-Laguna},\ and\ \citenamefont
  {Lewenstein}}]{Migda2013}%
  \BibitemOpen
  \bibfield  {author} {\bibinfo {author} {\bibfnamefont {P.}~\bibnamefont
  {Migda{\l}}}, \bibinfo {author} {\bibfnamefont {J.}~\bibnamefont
  {Rodriguez-Laguna}}, \ and\ \bibinfo {author} {\bibfnamefont
  {M.}~\bibnamefont {Lewenstein}},\ }\href {\doibase
  10.1103/PhysRevA.88.012335} {\bibfield  {journal} {\bibinfo  {journal} {Phys.
  Rev. A}\ }\textbf {\bibinfo {volume} {88}},\ \bibinfo {pages} {012335}
  (\bibinfo {year} {2013})}\BibitemShut {NoStop}%
\bibitem [{\citenamefont {D\"ur}\ \emph {et~al.}(2000)\citenamefont {D\"ur},
  \citenamefont {Vidal},\ and\ \citenamefont {Cirac}}]{Dur2000}%
  \BibitemOpen
  \bibfield  {author} {\bibinfo {author} {\bibfnamefont {W.}~\bibnamefont
  {D\"ur}}, \bibinfo {author} {\bibfnamefont {G.}~\bibnamefont {Vidal}}, \ and\
  \bibinfo {author} {\bibfnamefont {J.~I.}\ \bibnamefont {Cirac}},\ }\href
  {\doibase 10.1103/PhysRevA.62.062314} {\bibfield  {journal} {\bibinfo
  {journal} {Phys. Rev. A}\ }\textbf {\bibinfo {volume} {62}},\ \bibinfo
  {pages} {062314} (\bibinfo {year} {2000})}\BibitemShut {NoStop}%
\bibitem [{\citenamefont {Verstraete}\ \emph {et~al.}(2002)\citenamefont
  {Verstraete}, \citenamefont {Dehaene}, \citenamefont {De~Moor},\ and\
  \citenamefont {Verschelde}}]{Verstraete2002}%
  \BibitemOpen
  \bibfield  {author} {\bibinfo {author} {\bibfnamefont {F.}~\bibnamefont
  {Verstraete}}, \bibinfo {author} {\bibfnamefont {J.}~\bibnamefont {Dehaene}},
  \bibinfo {author} {\bibfnamefont {B.}~\bibnamefont {De~Moor}}, \ and\
  \bibinfo {author} {\bibfnamefont {H.}~\bibnamefont {Verschelde}},\
  }\href@noop {} {\bibfield  {journal} {\bibinfo  {journal} {Phys. Rev. A}\
  }\textbf {\bibinfo {volume} {65}},\ \bibinfo {pages} {052112} (\bibinfo
  {year} {2002})}\BibitemShut {NoStop}%
\bibitem [{\citenamefont {Briand}\ \emph {et~al.}(2004)\citenamefont {Briand},
  \citenamefont {Luque}, \citenamefont {Thibon},\ and\ \citenamefont
  {Verstraete}}]{Briand2004}%
  \BibitemOpen
  \bibfield  {author} {\bibinfo {author} {\bibfnamefont {E.}~\bibnamefont
  {Briand}}, \bibinfo {author} {\bibfnamefont {J.}~\bibnamefont {Luque}},
  \bibinfo {author} {\bibfnamefont {J.}~\bibnamefont {Thibon}}, \ and\ \bibinfo
  {author} {\bibfnamefont {F.}~\bibnamefont {Verstraete}},\ }\href {\doibase
  10.1063/1.1809255} {\bibfield  {journal} {\bibinfo  {journal} {J. Math.
  Phys.}\ }\textbf {\bibinfo {volume} {45}},\ \bibinfo {pages} {4855} (\bibinfo
  {year} {2004})}\BibitemShut {NoStop}%
\bibitem [{\citenamefont {Chitambar}\ \emph {et~al.}(2010)\citenamefont
  {Chitambar}, \citenamefont {Miller},\ and\ \citenamefont
  {Shi}}]{Chitambar2010}%
  \BibitemOpen
  \bibfield  {author} {\bibinfo {author} {\bibfnamefont {E.}~\bibnamefont
  {Chitambar}}, \bibinfo {author} {\bibfnamefont {C.~A.}\ \bibnamefont
  {Miller}}, \ and\ \bibinfo {author} {\bibfnamefont {Y.}~\bibnamefont {Shi}},\
  }\href@noop {} {\bibfield  {journal} {\bibinfo  {journal} {J. Math. Phys.}\
  }\textbf {\bibinfo {volume} {51}} (\bibinfo {year} {2010})}\BibitemShut
  {NoStop}%
\bibitem [{\citenamefont {Gour}\ \emph {et~al.}(2017)\citenamefont {Gour},
  \citenamefont {Kraus},\ and\ \citenamefont {Wallach}}]{Gour2017}%
  \BibitemOpen
  \bibfield  {author} {\bibinfo {author} {\bibfnamefont {G.}~\bibnamefont
  {Gour}}, \bibinfo {author} {\bibfnamefont {B.}~\bibnamefont {Kraus}}, \ and\
  \bibinfo {author} {\bibfnamefont {N.~R.}\ \bibnamefont {Wallach}},\ }\href
  {\doibase 10.1063/1.5003015} {\bibfield  {journal} {\bibinfo  {journal} {J.
  Math. Phys.}\ }\textbf {\bibinfo {volume} {58}},\ \bibinfo {pages} {092204}
  (\bibinfo {year} {2017})}\BibitemShut {NoStop}%
\bibitem [{\citenamefont {Sauerwein}\ \emph {et~al.}(2018)\citenamefont
  {Sauerwein}, \citenamefont {Wallach}, \citenamefont {Gour},\ and\
  \citenamefont {Kraus}}]{Sauerwein2018}%
  \BibitemOpen
  \bibfield  {author} {\bibinfo {author} {\bibfnamefont {D.}~\bibnamefont
  {Sauerwein}}, \bibinfo {author} {\bibfnamefont {N.~R.}\ \bibnamefont
  {Wallach}}, \bibinfo {author} {\bibfnamefont {G.}~\bibnamefont {Gour}}, \
  and\ \bibinfo {author} {\bibfnamefont {B.}~\bibnamefont {Kraus}},\ }\href
  {\doibase 10.1103/PhysRevX.8.031020} {\bibfield  {journal} {\bibinfo
  {journal} {Phys. Rev. X}\ }\textbf {\bibinfo {volume} {8}},\ \bibinfo {pages}
  {031020} (\bibinfo {year} {2018})}\BibitemShut {NoStop}%
\bibitem [{\citenamefont {Poulin}\ \emph {et~al.}(2011)\citenamefont {Poulin},
  \citenamefont {Qarry}, \citenamefont {Somma},\ and\ \citenamefont
  {Verstraete}}]{Poulin2011}%
  \BibitemOpen
  \bibfield  {author} {\bibinfo {author} {\bibfnamefont {D.}~\bibnamefont
  {Poulin}}, \bibinfo {author} {\bibfnamefont {A.}~\bibnamefont {Qarry}},
  \bibinfo {author} {\bibfnamefont {R.}~\bibnamefont {Somma}}, \ and\ \bibinfo
  {author} {\bibfnamefont {F.}~\bibnamefont {Verstraete}},\ }\href {\doibase
  10.1103/PhysRevLett.106.170501} {\bibfield  {journal} {\bibinfo  {journal}
  {Phys. Rev. Lett.}\ }\textbf {\bibinfo {volume} {106}},\ \bibinfo {pages}
  {170501} (\bibinfo {year} {2011})}\BibitemShut {NoStop}%
\bibitem [{\citenamefont {Fannes}\ \emph {et~al.}(1992)\citenamefont {Fannes},
  \citenamefont {Nachtergaele},\ and\ \citenamefont {Werner}}]{Fannes1992}%
  \BibitemOpen
  \bibfield  {author} {\bibinfo {author} {\bibfnamefont {M.}~\bibnamefont
  {Fannes}}, \bibinfo {author} {\bibfnamefont {B.}~\bibnamefont
  {Nachtergaele}}, \ and\ \bibinfo {author} {\bibfnamefont {R.~F.}\
  \bibnamefont {Werner}},\ }\href {\doibase 10.1007/BF02099178} {\bibfield
  {journal} {\bibinfo  {journal} {Commun. Math. Phys.}\ }\textbf {\bibinfo
  {volume} {144}},\ \bibinfo {pages} {443} (\bibinfo {year}
  {1992})}\BibitemShut {NoStop}%
\bibitem [{\citenamefont {P{\'{e}}rez-Garc{\'{i}}a}\ \emph
  {et~al.}(2007)\citenamefont {P{\'{e}}rez-Garc{\'{i}}a}, \citenamefont
  {Verstraete}, \citenamefont {Wolf},\ and\ \citenamefont
  {Cirac}}]{Perez-Garcia2007}%
  \BibitemOpen
  \bibfield  {author} {\bibinfo {author} {\bibfnamefont {D.}~\bibnamefont
  {P{\'{e}}rez-Garc{\'{i}}a}}, \bibinfo {author} {\bibfnamefont
  {F.}~\bibnamefont {Verstraete}}, \bibinfo {author} {\bibfnamefont {M.~M.}\
  \bibnamefont {Wolf}}, \ and\ \bibinfo {author} {\bibfnamefont {J.~I.}\
  \bibnamefont {Cirac}},\ }\href@noop {} {\bibfield  {journal} {\bibinfo
  {journal} {Quant. Inf. Comp.}\ }\textbf {\bibinfo {volume} {7}},\ \bibinfo
  {pages} {401} (\bibinfo {year} {2007})}\BibitemShut {NoStop}%
\bibitem [{\citenamefont {Hastings}(2007)}]{Hastings2007}%
  \BibitemOpen
  \bibfield  {author} {\bibinfo {author} {\bibfnamefont {M.~B.}\ \bibnamefont
  {Hastings}},\ }\href {\doibase 10.1088/1742-5468/2007/08/P08024} {\bibfield
  {journal} {\bibinfo  {journal} {J. Stat. Mech. Theory Exp.}\ }\textbf
  {\bibinfo {volume} {2007}},\ \bibinfo {pages} {P08024} (\bibinfo {year}
  {2007})}\BibitemShut {NoStop}%
\bibitem [{\citenamefont {Sch{\"{o}}n}\ \emph {et~al.}(2005)\citenamefont
  {Sch{\"{o}}n}, \citenamefont {Solano}, \citenamefont {Verstraete},
  \citenamefont {Cirac},\ and\ \citenamefont {Wolf}}]{Schon2005}%
  \BibitemOpen
  \bibfield  {author} {\bibinfo {author} {\bibfnamefont {C.}~\bibnamefont
  {Sch{\"{o}}n}}, \bibinfo {author} {\bibfnamefont {E.}~\bibnamefont {Solano}},
  \bibinfo {author} {\bibfnamefont {F.}~\bibnamefont {Verstraete}}, \bibinfo
  {author} {\bibfnamefont {J.~I.}\ \bibnamefont {Cirac}}, \ and\ \bibinfo
  {author} {\bibfnamefont {M.~M.}\ \bibnamefont {Wolf}},\ }\href {\doibase
  10.1103/PhysRevLett.95.110503} {\bibfield  {journal} {\bibinfo  {journal}
  {Phys. Rev. Lett.}\ }\textbf {\bibinfo {volume} {95}},\ \bibinfo {pages}
  {110503} (\bibinfo {year} {2005})}\BibitemShut {NoStop}%
\bibitem [{\citenamefont {Sanz}\ \emph {et~al.}(2009)\citenamefont {Sanz},
  \citenamefont {Wolf}, \citenamefont {P{\'{e}}rez-Garc{\'{i}}a},\ and\
  \citenamefont {Cirac}}]{Sanz2009}%
  \BibitemOpen
  \bibfield  {author} {\bibinfo {author} {\bibfnamefont {M.}~\bibnamefont
  {Sanz}}, \bibinfo {author} {\bibfnamefont {M.~M.}\ \bibnamefont {Wolf}},
  \bibinfo {author} {\bibfnamefont {D.}~\bibnamefont
  {P{\'{e}}rez-Garc{\'{i}}a}}, \ and\ \bibinfo {author} {\bibfnamefont {J.~I.}\
  \bibnamefont {Cirac}},\ }\href {\doibase 10.1103/PhysRevA.79.042308}
  {\bibfield  {journal} {\bibinfo  {journal} {Phys. Rev. A}\ }\textbf {\bibinfo
  {volume} {79}},\ \bibinfo {pages} {042308} (\bibinfo {year}
  {2009})}\BibitemShut {NoStop}%
\bibitem [{\citenamefont {Singh}\ \emph {et~al.}(2010)\citenamefont {Singh},
  \citenamefont {Pfeifer},\ and\ \citenamefont {Vidal}}]{Singh2010}%
  \BibitemOpen
  \bibfield  {author} {\bibinfo {author} {\bibfnamefont {S.}~\bibnamefont
  {Singh}}, \bibinfo {author} {\bibfnamefont {R.~N.~C.}\ \bibnamefont
  {Pfeifer}}, \ and\ \bibinfo {author} {\bibfnamefont {G.}~\bibnamefont
  {Vidal}},\ }\href {\doibase 10.1103/PhysRevA.82.050301} {\bibfield  {journal}
  {\bibinfo  {journal} {Phys. Rev. A}\ }\textbf {\bibinfo {volume} {82}},\
  \bibinfo {pages} {050301} (\bibinfo {year} {2010})}\BibitemShut {NoStop}%
\bibitem [{Note1()}]{Note1}%
  \BibitemOpen
  \bibinfo {note} {By local (global) symmetries we mean (in-)homogeneous
  symmetries, i.e., that a different (the same) action operates on each
  physical system.}\BibitemShut {Stop}%
\bibitem [{\citenamefont {Chen}\ \emph {et~al.}(2011)\citenamefont {Chen},
  \citenamefont {Gu},\ and\ \citenamefont {Wen}}]{Chen2011}%
  \BibitemOpen
  \bibfield  {author} {\bibinfo {author} {\bibfnamefont {X.}~\bibnamefont
  {Chen}}, \bibinfo {author} {\bibfnamefont {Z.~C.}\ \bibnamefont {Gu}}, \ and\
  \bibinfo {author} {\bibfnamefont {X.~G.}\ \bibnamefont {Wen}},\ }\href
  {\doibase 10.1103/PhysRevB.84.235128} {\bibfield  {journal} {\bibinfo
  {journal} {Phys. Rev. B}\ }\textbf {\bibinfo {volume} {84}},\ \bibinfo
  {pages} {235128} (\bibinfo {year} {2011})}\BibitemShut {NoStop}%
\bibitem [{\citenamefont {Schuch}\ \emph {et~al.}(2011)\citenamefont {Schuch},
  \citenamefont {P{\'{e}}rez-Garc{\'{i}}a},\ and\ \citenamefont
  {Cirac}}]{Schuch2011}%
  \BibitemOpen
  \bibfield  {author} {\bibinfo {author} {\bibfnamefont {N.}~\bibnamefont
  {Schuch}}, \bibinfo {author} {\bibfnamefont {D.}~\bibnamefont
  {P{\'{e}}rez-Garc{\'{i}}a}}, \ and\ \bibinfo {author} {\bibfnamefont {J.~I.}\
  \bibnamefont {Cirac}},\ }\href@noop {} {\bibfield  {journal} {\bibinfo
  {journal} {Phys. Rev. B}\ }\textbf {\bibinfo {volume} {165139}},\ \bibinfo
  {pages} {1} (\bibinfo {year} {2011})}\BibitemShut {NoStop}%
\bibitem [{\citenamefont {Pollmann}\ and\ \citenamefont
  {Turner}(2012)}]{Pollmann2012}%
  \BibitemOpen
  \bibfield  {author} {\bibinfo {author} {\bibfnamefont {F.}~\bibnamefont
  {Pollmann}}\ and\ \bibinfo {author} {\bibfnamefont {A.~M.}\ \bibnamefont
  {Turner}},\ }\href@noop {} {\bibfield  {journal} {\bibinfo  {journal} {Phys.
  Rev. B}\ }\textbf {\bibinfo {volume} {86}},\ \bibinfo {pages} {125441}
  (\bibinfo {year} {2012})}\BibitemShut {NoStop}%
\bibitem [{\citenamefont {Micha{\l}ek}\ and\ \citenamefont
  {Shitov}()}]{Michalek2018}%
  \BibitemOpen
  \bibfield  {author} {\bibinfo {author} {\bibfnamefont {M.}~\bibnamefont
  {Micha{\l}ek}}\ and\ \bibinfo {author} {\bibfnamefont {Y.}~\bibnamefont
  {Shitov}},\ }\href@noop {} {\bibinfo  {journal} {arXiv:math.AC/1809.04387}\
  }\BibitemShut {NoStop}%
\bibitem [{\citenamefont {Molnar}\ \emph {et~al.}(2018)\citenamefont {Molnar},
  \citenamefont {Garre-Rubio}, \citenamefont {P{\'{e}}rez-Garc{\'{i}}a},
  \citenamefont {Schuch},\ and\ \citenamefont {Cirac}}]{Molnar2018}%
  \BibitemOpen
\bibfield  {journal} {  }\bibfield  {author} {\bibinfo {author} {\bibfnamefont
  {A.}~\bibnamefont {Molnar}}, \bibinfo {author} {\bibfnamefont
  {J.}~\bibnamefont {Garre-Rubio}}, \bibinfo {author} {\bibfnamefont
  {D.}~\bibnamefont {P{\'{e}}rez-Garc{\'{i}}a}}, \bibinfo {author}
  {\bibfnamefont {N.}~\bibnamefont {Schuch}}, \ and\ \bibinfo {author}
  {\bibfnamefont {J.~I.}\ \bibnamefont {Cirac}},\ }\href {\doibase
  10.1088/1367-2630/aae9fa} {\bibfield  {journal} {\bibinfo  {journal} {New J.
  Phys.}\ }\textbf {\bibinfo {volume} {20}},\ \bibinfo {pages} {113017}
  (\bibinfo {year} {2018})}\BibitemShut {NoStop}%
\bibitem [{\citenamefont {Greenberger}\ \emph {et~al.}(1989)\citenamefont
  {Greenberger}, \citenamefont {Horne},\ and\ \citenamefont
  {Zeilinger}}]{Greenberger1989}%
  \BibitemOpen
  \bibfield  {author} {\bibinfo {author} {\bibfnamefont {D.~M.}\ \bibnamefont
  {Greenberger}}, \bibinfo {author} {\bibfnamefont {M.~A.}\ \bibnamefont
  {Horne}}, \ and\ \bibinfo {author} {\bibfnamefont {A.}~\bibnamefont
  {Zeilinger}},\ }in\ \href {\doibase 10.1007/978-94-017-0849-4_10} {\emph
  {\bibinfo {booktitle} {Bell's Theorem, Quantum Theory and Conceptions of the
  Universe}}}\ (\bibinfo  {publisher} {Springer Netherlands},\ \bibinfo
  {address} {Dordrecht},\ \bibinfo {year} {1989})\BibitemShut {NoStop}%
\bibitem [{\citenamefont {Raussendorf}\ and\ \citenamefont
  {Briegel}(2001)}]{Raussendorf2001}%
  \BibitemOpen
  \bibfield  {author} {\bibinfo {author} {\bibfnamefont {R.}~\bibnamefont
  {Raussendorf}}\ and\ \bibinfo {author} {\bibfnamefont {H.~J.}\ \bibnamefont
  {Briegel}},\ }\href {\doibase 10.1103/PhysRevLett.86.5188} {\bibfield
  {journal} {\bibinfo  {journal} {Phys. Rev. Lett.}\ }\textbf {\bibinfo
  {volume} {86}},\ \bibinfo {pages} {5188} (\bibinfo {year}
  {2001})}\BibitemShut {NoStop}%
\bibitem [{\citenamefont {Affleck}\ \emph {et~al.}(1987)\citenamefont
  {Affleck}, \citenamefont {Kennedy}, \citenamefont {Lieb},\ and\ \citenamefont
  {Tasaki}}]{Affleck1987}%
  \BibitemOpen
  \bibfield  {author} {\bibinfo {author} {\bibfnamefont {I.}~\bibnamefont
  {Affleck}}, \bibinfo {author} {\bibfnamefont {T.}~\bibnamefont {Kennedy}},
  \bibinfo {author} {\bibfnamefont {E.~H.}\ \bibnamefont {Lieb}}, \ and\
  \bibinfo {author} {\bibfnamefont {H.}~\bibnamefont {Tasaki}},\ }\href
  {\doibase 10.1103/PhysRevLett.59.799} {\bibfield  {journal} {\bibinfo
  {journal} {Phys. Rev. Lett.}\ }\textbf {\bibinfo {volume} {59}},\ \bibinfo
  {pages} {799} (\bibinfo {year} {1987})}\BibitemShut {NoStop}%
\bibitem [{\citenamefont {Verstraete}\ \emph {et~al.}(2005)\citenamefont
  {Verstraete}, \citenamefont {Cirac}, \citenamefont {Latorre}, \citenamefont
  {Rico},\ and\ \citenamefont {Wolf}}]{Verstraete2005}%
  \BibitemOpen
  \bibfield  {author} {\bibinfo {author} {\bibfnamefont {F.}~\bibnamefont
  {Verstraete}}, \bibinfo {author} {\bibfnamefont {J.~I.}\ \bibnamefont
  {Cirac}}, \bibinfo {author} {\bibfnamefont {J.~I.}\ \bibnamefont {Latorre}},
  \bibinfo {author} {\bibfnamefont {E.}~\bibnamefont {Rico}}, \ and\ \bibinfo
  {author} {\bibfnamefont {M.~M.}\ \bibnamefont {Wolf}},\ }\href {\doibase
  10.1103/PhysRevLett.94.140601} {\bibfield  {journal} {\bibinfo  {journal}
  {Phys. Rev. Lett.}\ }\textbf {\bibinfo {volume} {94}},\ \bibinfo {pages}
  {140601} (\bibinfo {year} {2005})}\BibitemShut {NoStop}%
\bibitem [{sup()}]{supp}%
  \BibitemOpen
  \href@noop {} {}\bibinfo {note} {Supplemental Material}\BibitemShut {NoStop}%
\bibitem [{\citenamefont {Hein}\ \emph {et~al.}(2006)\citenamefont {Hein},
  \citenamefont {D{\"{u}}r}, \citenamefont {Eisert}, \citenamefont
  {Raussendorf}, \citenamefont {Van~den Nest},\ and\ \citenamefont
  {Briegel}}]{Hein2006}%
  \BibitemOpen
  \bibfield  {author} {\bibinfo {author} {\bibfnamefont {M.}~\bibnamefont
  {Hein}}, \bibinfo {author} {\bibfnamefont {W.}~\bibnamefont {D{\"{u}}r}},
  \bibinfo {author} {\bibfnamefont {J.}~\bibnamefont {Eisert}}, \bibinfo
  {author} {\bibfnamefont {R.}~\bibnamefont {Raussendorf}}, \bibinfo {author}
  {\bibfnamefont {M.}~\bibnamefont {Van~den Nest}}, \ and\ \bibinfo {author}
  {\bibfnamefont {H.-J.}\ \bibnamefont {Briegel}},\ }in\ \href@noop {} {\emph
  {\bibinfo {booktitle} {{Quantum Computers, Algorithms and Chaos}}}}\
  (\bibinfo  {publisher} {IOS Press},\ \bibinfo {year} {2006})\BibitemShut
  {NoStop}%
\end{thebibliography}%


\begin{thebibliography}{10}%
\makeatletter
\providecommand \@ifxundefined [1]{%
 \@ifx{#1\undefined}
}%
\providecommand \@ifnum [1]{%
 \ifnum #1\expandafter \@firstoftwo
 \else \expandafter \@secondoftwo
 \fi
}%
\providecommand \@ifx [1]{%
 \ifx #1\expandafter \@firstoftwo
 \else \expandafter \@secondoftwo
 \fi
}%
\providecommand \natexlab [1]{#1}%
\providecommand \enquote  [1]{``#1''}%
\providecommand \bibnamefont  [1]{#1}%
\providecommand \bibfnamefont [1]{#1}%
\providecommand \citenamefont [1]{#1}%
\providecommand \href@noop [0]{\@secondoftwo}%
\providecommand \href [0]{\begingroup \@sanitize@url \@href}%
\providecommand \@href[1]{\@@startlink{#1}\@@href}%
\providecommand \@@href[1]{\endgroup#1\@@endlink}%
\providecommand \@sanitize@url [0]{\catcode `\\12\catcode `\$12\catcode
  `\&12\catcode `\#12\catcode `\^12\catcode `\_12\catcode `\%12\relax}%
\providecommand \@@startlink[1]{}%
\providecommand \@@endlink[0]{}%
\providecommand \url  [0]{\begingroup\@sanitize@url \@url }%
\providecommand \@url [1]{\endgroup\@href {#1}{\urlprefix }}%
\providecommand \urlprefix  [0]{URL }%
\providecommand \Eprint [0]{\href }%
\providecommand \doibase [0]{http://dx.doi.org/}%
\providecommand \selectlanguage [0]{\@gobble}%
\providecommand \bibinfo  [0]{\@secondoftwo}%
\providecommand \bibfield  [0]{\@secondoftwo}%
\providecommand \translation [1]{[#1]}%
\providecommand \BibitemOpen [0]{}%
\providecommand \bibitemStop [0]{}%
\providecommand \bibitemNoStop [0]{.\EOS\space}%
\providecommand \EOS [0]{\spacefactor3000\relax}%
\providecommand \BibitemShut  [1]{\csname bibitem#1\endcsname}%
\let\auto@bib@innerbib\@empty
\bibitem [{\citenamefont {Molnar}\ \emph {et~al.}(2018)\citenamefont {Molnar},
  \citenamefont {Garre-Rubio}, \citenamefont {P{\'{e}}rez-Garc{\'{i}}a},
  \citenamefont {Schuch},\ and\ \citenamefont {Cirac}}]{Molnar2018}%
  \BibitemOpen
  \bibfield  {author} {\bibinfo {author} {\bibfnamefont {A.}~\bibnamefont
  {Molnar}}, \bibinfo {author} {\bibfnamefont {J.}~\bibnamefont {Garre-Rubio}},
  \bibinfo {author} {\bibfnamefont {D.}~\bibnamefont
  {P{\'{e}}rez-Garc{\'{i}}a}}, \bibinfo {author} {\bibfnamefont
  {N.}~\bibnamefont {Schuch}}, \ and\ \bibinfo {author} {\bibfnamefont {J.~I.}\
  \bibnamefont {Cirac}},\ }\href {\doibase 10.1088/1367-2630/aae9fa} {\bibfield
   {journal} {\bibinfo  {journal} {New J. Phys.}\ }\textbf {\bibinfo {volume}
  {20}},\ \bibinfo {pages} {113017} (\bibinfo {year} {2018})}\BibitemShut
  {NoStop}%
\bibitem [{Note1()}]{Note1}%
  \BibitemOpen
  \bibinfo {note} {Note that the local operators defining $h_k$ are, of course,
  only unique up to a multiplicative factor, i.e., $h_k = s_k \otimes x_k
  \otimes y_k^T = (\protect \frac {1}{\lambda _1 \cdot \lambda _2} s_k) \otimes
  (\lambda _1 \cdot x_k) \otimes (\lambda _2 \cdot y_k)^T$ for any $\lambda _1,
  \lambda _2 \not =0$. However, in the following we always fix one particular
  choice of local operators.}\BibitemShut {Stop}%
\bibitem [{\citenamefont {Affleck}\ \emph {et~al.}(1987)\citenamefont
  {Affleck}, \citenamefont {Kennedy}, \citenamefont {Lieb},\ and\ \citenamefont
  {Tasaki}}]{Affleck1987}%
  \BibitemOpen
  \bibfield  {author} {\bibinfo {author} {\bibfnamefont {I.}~\bibnamefont
  {Affleck}}, \bibinfo {author} {\bibfnamefont {T.}~\bibnamefont {Kennedy}},
  \bibinfo {author} {\bibfnamefont {E.~H.}\ \bibnamefont {Lieb}}, \ and\
  \bibinfo {author} {\bibfnamefont {H.}~\bibnamefont {Tasaki}},\ }\href
  {\doibase 10.1103/PhysRevLett.59.799} {\bibfield  {journal} {\bibinfo
  {journal} {Phys. Rev. Lett.}\ }\textbf {\bibinfo {volume} {59}},\ \bibinfo
  {pages} {799} (\bibinfo {year} {1987})}\BibitemShut {NoStop}%
\bibitem [{\citenamefont {Horn}\ and\ \citenamefont
  {Johnson}(2013)}]{Horn2013}%
  \BibitemOpen
  \bibfield  {author} {\bibinfo {author} {\bibfnamefont {R.~A.}\ \bibnamefont
  {Horn}}\ and\ \bibinfo {author} {\bibfnamefont {C.~R.}\ \bibnamefont
  {Johnson}},\ }\href
  {https://www.cambridge.org/de/academic/subjects/mathematics/algebra/matrix-analysis-2nd-edition?format=PB{\&}isbn=9780521548236}
  {\emph {\bibinfo {title} {{Matrix analysis}}}}\ (\bibinfo  {publisher}
  {Cambridge University Press},\ \bibinfo {year} {2013})\BibitemShut {NoStop}%
\bibitem [{\citenamefont {Coffman}\ \emph {et~al.}(2000)\citenamefont
  {Coffman}, \citenamefont {Kundu},\ and\ \citenamefont
  {Wootters}}]{Coffman2000}%
  \BibitemOpen
  \bibfield  {author} {\bibinfo {author} {\bibfnamefont {V.}~\bibnamefont
  {Coffman}}, \bibinfo {author} {\bibfnamefont {J.}~\bibnamefont {Kundu}}, \
  and\ \bibinfo {author} {\bibfnamefont {W.~K.}\ \bibnamefont {Wootters}},\
  }\href {\doibase 10.1103/PhysRevA.61.052306} {\bibfield  {journal} {\bibinfo
  {journal} {Phys. Rev. A}\ }\textbf {\bibinfo {volume} {61}},\ \bibinfo
  {pages} {052306} (\bibinfo {year} {2000})}\BibitemShut {NoStop}%
\bibitem [{\citenamefont {Verstraete}\ \emph {et~al.}(2002)\citenamefont
  {Verstraete}, \citenamefont {Dehaene},\ and\ \citenamefont {{De
  Moor}}}]{Verstraete2002b}%
  \BibitemOpen
  \bibfield  {author} {\bibinfo {author} {\bibfnamefont {F.}~\bibnamefont
  {Verstraete}}, \bibinfo {author} {\bibfnamefont {J.}~\bibnamefont {Dehaene}},
  \ and\ \bibinfo {author} {\bibfnamefont {B.}~\bibnamefont {{De Moor}}},\
  }\href {\doibase 10.1103/PhysRevA.65.032308} {\bibfield  {journal} {\bibinfo
  {journal} {Phys. Rev. A}\ }\textbf {\bibinfo {volume} {65}},\ \bibinfo
  {pages} {032308} (\bibinfo {year} {2002})}\BibitemShut {NoStop}%
\bibitem [{\citenamefont {D\"ur}\ \emph {et~al.}(2000)\citenamefont {D\"ur},
  \citenamefont {Vidal},\ and\ \citenamefont {Cirac}}]{Dur2000}%
  \BibitemOpen
  \bibfield  {author} {\bibinfo {author} {\bibfnamefont {W.}~\bibnamefont
  {D\"ur}}, \bibinfo {author} {\bibfnamefont {G.}~\bibnamefont {Vidal}}, \ and\
  \bibinfo {author} {\bibfnamefont {J.~I.}\ \bibnamefont {Cirac}},\ }\href
  {\doibase 10.1103/PhysRevA.62.062314} {\bibfield  {journal} {\bibinfo
  {journal} {Phys. Rev. A}\ }\textbf {\bibinfo {volume} {62}},\ \bibinfo
  {pages} {062314} (\bibinfo {year} {2000})}\BibitemShut {NoStop}%
\bibitem [{\citenamefont {Micha{\l}ek}\ and\ \citenamefont
  {Shitov}()}]{Michalek2018}%
  \BibitemOpen
  \bibfield  {author} {\bibinfo {author} {\bibfnamefont {M.}~\bibnamefont
  {Micha{\l}ek}}\ and\ \bibinfo {author} {\bibfnamefont {Y.}~\bibnamefont
  {Shitov}},\ }\href@noop {} {\bibinfo  {journal} {arXiv:math.AC/1809.04387}\
  }\BibitemShut {NoStop}%
\bibitem [{\citenamefont {Hein}\ \emph {et~al.}(2006)\citenamefont {Hein},
  \citenamefont {D{\"{u}}r}, \citenamefont {Eisert}, \citenamefont
  {Raussendorf}, \citenamefont {Van~den Nest},\ and\ \citenamefont
  {Briegel}}]{Hein2006}%
  \BibitemOpen
\bibfield  {journal} {  }\bibfield  {author} {\bibinfo {author} {\bibfnamefont
  {M.}~\bibnamefont {Hein}}, \bibinfo {author} {\bibfnamefont {W.}~\bibnamefont
  {D{\"{u}}r}}, \bibinfo {author} {\bibfnamefont {J.}~\bibnamefont {Eisert}},
  \bibinfo {author} {\bibfnamefont {R.}~\bibnamefont {Raussendorf}}, \bibinfo
  {author} {\bibfnamefont {M.}~\bibnamefont {Van~den Nest}}, \ and\ \bibinfo
  {author} {\bibfnamefont {H.-J.}\ \bibnamefont {Briegel}},\ }in\ \href@noop {}
  {\emph {\bibinfo {booktitle} {{Quantum Computers, Algorithms and Chaos}}}}\
  (\bibinfo  {publisher} {IOS Press},\ \bibinfo {year} {2006})\BibitemShut
  {NoStop}%
\bibitem [{\citenamefont {de~Vicente}\ \emph {et~al.}(2013)\citenamefont
  {de~Vicente}, \citenamefont {Spee},\ and\ \citenamefont
  {Kraus}}]{deVicente2013}%
  \BibitemOpen
  \bibfield  {author} {\bibinfo {author} {\bibfnamefont {J.~I.}\ \bibnamefont
  {de~Vicente}}, \bibinfo {author} {\bibfnamefont {C.}~\bibnamefont {Spee}}, \
  and\ \bibinfo {author} {\bibfnamefont {B.}~\bibnamefont {Kraus}},\
  }\href@noop {} {\bibfield  {journal} {\bibinfo  {journal} {Phys. Rev. Lett.}\
  }\textbf {\bibinfo {volume} {111}},\ \bibinfo {pages} {110502} (\bibinfo
  {year} {2013})}\BibitemShut {NoStop}%
\end{thebibliography}%
\end{document}